\newcommand{\tw}{{\mathtt{tw}}}
\newcommand{\pw}{{\mathtt{pw}}}
\newcommand{\vc}{{\mathtt{vc}}}
\newcommand{\ml}{{\mathtt{ml}}}
\newcommand{\vi}{{\mathtt{vi}}}
\newcommand{\td}{{\mathtt{td}}}
\newcommand{\inc}{{\mathtt{inc}}}
\newcommand{\connE}{{\mathtt{connE}}}
\newcommand{\STNT}{\textsc{Non-Terminal Spanning Tree}}
\newcommand{\MSTNT}{\textsc{Minimum Weight Non-Terminal Spanning Tree}}
\newcommand{\revised}[1]{\textcolor{black}{#1}}
\newcommand{\NT}{V_{\rm NT}}
\title{Finding a Minimum Spanning Tree with a Small Non-Terminal Set}
\author{Tesshu Hanaka}{Kyushu University, Japan}{johnqpublic@dummyuni.org}{https://orcid.org/0000-0001-6943-856X}{JSPS KAKENHI Grant Number JP21K17707, JP21H05852, JP22H00513, JP23H04388, and JST CRONOS Grant Number JPMJCS24K2}
\author{Yasuaki Kobayashi}{Hokkaido University, Japan}{koba@ist.hokudai.ac.jp}{https://orcid.org/0000-0003-3244-6915}{JSPS KAKENHI Grant Numbers JP20H00595 and JP23H03344}
\authorrunning{T. Hanaka and Y. Kobayashi} 
\keywords{Spanning tree, Fixed-parameter algorithms, Parameterized complexity, Kernelization} 
\begin{document}

\maketitle

\begin{abstract}
In this paper, we study the problem of finding a minimum weight spanning tree that contains each vertex in a given subset $\NT$ of vertices as an internal vertex. This problem, called {\MSTNT}, includes \textsc{$s$-$t$ Hamiltonian Path} as a special case, and hence it is NP-hard.
In this paper, we first observe that {\STNT}, the unweighted counterpart of {\MSTNT}, is already NP-hard on some special graph classes. Moreover, it is W[1]-hard when parameterized by clique-width. In contrast, we give a $3k$-vertex kernel and $O^*(2^k)$-time algorithm, where $k$ is the size of non-terminal set $\NT$.
The latter algorithm can be extended to {\MSTNT} with the restriction that each edge has a polynomially bounded integral weight.
We also show that {\MSTNT} is fixed-parameter tractable parameterized by the number of edges in the subgraph induced by the non-terminal set $\NT$, extending the fixed-parameter tractability of {\MSTNT} to a more general case.
Finally, we give several results for structural parameterization.
\end{abstract}

\section{Introduction}
The notion of spanning trees plays a fundamental role in graph theory, and the minimum weight spanning tree problem is arguably one of the most well-studied combinatorial optimization problems.
Numerous variants of this problem are studied in the literature (e.g. \cite{Goemans2006,HASSIN1995,Rosamond2016}).
In this paper, we consider {\MSTNT}, which is defined as follows. 
In a (spanning) tree $T$, leaf vertices, which are of degree $1$, are called \emph{terminals}, and other vertices are called \emph{non-terminals}. 
In {\MSTNT}, we are given an edge-weighted graph $G=(V, E, w)$ with $w\colon E \to \mathbb R^+$ and a set of designated non-terminals $\NT\subseteq V$. The goal of this problem is to find a minimum weight spanning tree $T$ of $G$ subject to the condition that each vertex in $\NT$ is of degree at least $2$ in $T$.
We also consider the unweighted variant of {\MSTNT}, which we call \STNT: Given a graph $G = (V, E)$ and $\NT \subseteq V$, the goal is to determine whether $G$ has a spanning tree $T$ such that each vertex in $\NT$ is of degree at least $2$ in $T$.

{\MSTNT} was firstly introduced by Zhang and Yin \cite{ZHANG2012}. Unlike the minimum weight spanning tree problem, {\MSTNT} is NP-hard~\cite{ZHANG2012}.
Nakayama and Masuyama \cite{Nakayama2016:cograph} observed that {\STNT} is also NP-hard as \textsc{$s$-$t$ Hamiltonian Path}, the problem of finding a Hamiltonian path between specified vertices \revised{$s$ and $t$}, is a special case of {\STNT}: When $\NT = V\setminus \{s,t\}$, any solution of {\STNT} is a Hamiltonian path between $s$ and $t$.
Nakayama and Masuyama~\cite{Nakayama2016:cograph,Nakayama2017:outerplanar,Nakayama2018:interval,Nakayama2019:SP} devised polynomial-time algorithms for {\STNT} on several classes of graphs, such as cographs, outerplanar graphs, and series-parallel graphs.

\subsection{Our contribution}\label{sec:contribution}
In this paper, we study {\MSTNT} and {\STNT} from the viewpoint of parameterized complexity.

\revised{On the positive side,} we \revised{first} give a $3k$-vertex kernel for {\STNT}, where $k$ is the number of vertices in $\NT$.
We also give polynomial kernelizations for {\STNT} with respect to vertex cover number and max leaf number.
For {\MSTNT}, we give an $O^*(2^k)$-time algorithm\footnote{The $O^*$ notation suppresses a polynomial factor of the input size.} when the weight of edges is integral and upper bounded by a polynomial in the input size.  Moreover, we design an $O^*(2^{\ell})$-time algorithm for {\MSTNT} for an arbitrary edge weight, where $\ell$ is the number of edges in the subgraph induced by $\NT$. Note that this parameter $\ell$ is ``smaller'' than  $k$ in the sense that $\ell \le \binom{k}{2}$ but $k$ could be arbitrarily large even when $\ell=0$.
Finally, we show that the property of being a spanning tree that has all vertices in $\NT$ as internal vertices is expressed by an MSO$_2$ formula, which implies that {\MSTNT} is fixed-parameter tractable when parameterized by treewidth.

On the negative side, we observe that {\STNT} is NP-hard even on planar bipartite graphs of maximum degree 3, strongly chordal split graphs, and chordal bipartite graphs. Moreover, we observe that {\STNT} is W[1]-hard when parameterized by cliquewidth, and it cannot be solved in time $2^{o(n)}$ unless the Exponential Time Hypothesis fails\revised{, where $n$ is the number of vertices in the input graph}.
In contrast to the fact that {\STNT} admits a polynomial kernelization with respect to vertex cover number, we show that it does not admit a polynomial kernelization with respect to vertex integrity under some complexity-theoretic assumption.

    

\subsection{Related work}
The minimum weight spanning tree problem is a fundamental graph problem.
There are many variants of this problem, such as \textsc{Minimum Diameter Spanning Tree}~\cite{HASSIN1995}, \textsc{Degree-Bounded Spanning Tree}~\cite{Goemans2006}, \textsc{Diameter-Bounded Spanning Tree}~\cite{GareyJ79}, and \textsc{Max Leaf Spanning Tree}~\cite{Rosamond2016}.

\textsc{Max Internal Spanning Tree} is highly related to {\STNT}. Given a graph $G$ and an integer $k$, the task of \textsc{Max Internal Spanning Tree} is to find a spanning tree $T$ of $G$ that has at least $k$ internal vertices. The difference between {\STNT} and \textsc{Max Internal Spanning Tree} is that the former designates internal vertices in advance whereas the latter does not.
Since \textsc{Max Internal Spanning Tree} is a generalization of \textsc{Hamiltonian Path}, it is NP-hard.
For restricted graph classes, the problem can be solved in polynomial time, such as interval graphs \cite{LI2018_MIST_interval},  cacti, block graphs, cographs, and bipartite permutation graphs \cite{sharma2021_internal_graphclass}.
Binkele-Raible et al.~\cite{Binkele2013_MIST_Exact} give an $O^*(3^n)$-time algorithm for \textsc{Max Internal Spanning Tree} and then Nederlof~\cite{Nederlof_MIST_3nExact} improves the running time to $O^*(2^n)$, where $n$ is the number of vertices of the input graph.
For the fixed-parameter tractability, Fomin et al.~\cite{FOMIN2013_MIST_3k_kernel} show that \textsc{Max Internal Spanning Tree} admits a $3k$-vertex kernel. Then Li et al.~\cite{LI2017_MIST_2k-kernel} improve the size of the kernel by giving a $2k$-vertex kernel. 
The best known deterministic $O^*(4^k)$-time algorithm is obtained by combining a $2k$-vertex kernel with an $O^*(2^n)$-time algorithm, while there is a randomized $O^*(\min \{3.455^k, 1.946^n\})$-time algorithm \cite{Bjorklund2017_MIST_FPT}.
For approximation algorithms, Li et al.~\cite{LI2021_MIST_approx} propose a $3/4$-approximation algorithm and Chen et al.~\cite{Chen2018_MIST_approx} improve the approximation ratio to $13/17$.

\textsc{Max Leaf Spanning Tree} is a ``dual'' problem of \textsc{Max Internal Spanning Tree}, where the goal is to compute a spanning tree that has leaves as many as possible. This problem is highly related to \textsc{Connected Dominating Set} and well-studied in the context of parameterized complexity \cite{CyganBook2015,Rosamond2016}.

The \emph{designated leaves} version of \textsc{Max Leaf Spanning Tree} is called \textsc{Terminal Spanning Tree}. In this problem, given a graph $G=(V,E)$ and terminal set $W\subseteq V$, the task is to find a spanning tree $T$ such that every vertex in $W$ is \revised{a} leaf in $T$. Unlike \textsc{Non-Terminal Spanning Tree}, it can be solved in polynomial time by finding a spanning tree in the subgraph induced by $V\setminus W$ and adding vertices in $W$ as leaves.

\revised{In the context of graph theory, {\STNT} has already discussed in \cite{EgawaO15,Kiraly18}. More precisely, the paper \cite{EgawaO15} gives a sufficient condition that for a graph $G$, a vertex set $S$, and a function $f\colon S \to \mathbb N$, $G$ has a spanning tree such that the degree of each vertex $v \in S$ is at least $f(v)$ in the spanning tree. Kir\'{a}ly \cite{Kiraly18} gives the shorter proofs of the sufficient condition and proposes a polynomial-time algorithm for checking the condition.}

\section{Preliminaries}\label{sec:pre}
In this paper, we use several basic concepts and terminology in parameterized complexity.
We refer the reader to \cite{CyganBook2015}. 

\medskip
\noindent \textbf{Graphs.} Let $G = (V, E)$ be an undirected graph.
We denote by $V(G)$ and $E(G)$ the \revised{vertex set and edge set} of $G$, respectively.
We use $n$ to denote the number of vertices in $G$.
For $X \subseteq V$, we denote by $G[X]$ the subgraph induced by $X$.
For $v\in V$ and $X \subseteq V$, $G-v$ and $G-X$ denote $G[V\setminus \{v\}]$ and  $G[V\setminus X]$, respectively. 
We also define $G-e=(V,E\setminus \{e\})$ and $G+e=(V,E\cup\{e\})$.
For $v\in V$, we let $N_G(v)$ denote the set of neighbors of $v$ and $N_G[v]=N_G(v)\cup \{v\}$. This notation is extended to sets: for $X \subseteq V$, we let $N_G[X]=\bigcup_{v\in X}N_G[v]$ and $N_G(X)=N_G[X]\setminus X$.
When the subscript $G$ is clear from the context, we may omit it.

A \emph{forest} is a graph having no cycles. If a forest is connected, it is called a \emph{tree}.
In a forest $F$, a vertex is called a \emph{leaf} if it is of degree 1 in $F$, and called an \emph{internal} vertex otherwise.

\begin{definition}
For a vertex set $X \subseteq V$ and a forest $F$, we say $F$ is \emph{admissible} for $X$ if each $v\in X$ is an internal vertex in $F$.

\end{definition}
In {\STNT}, given a non-terminal set $\NT$,  the goal is to determine whether there is an admissible spanning tree for $\NT$ in $G$.
Throughout the paper, we denote $k=|\NT|$ and $\ell=|E(G[\NT])|$, and assume that $k \le n - 2$.
We also define an optimization version of {\STNT}. In {\MSTNT}, we are additionally given an edge weight function $w\colon E(G) \to \mathbb R$, the goal is to find an admissible spanning tree $T$ for $\NT$ in $G$ minimizing its total weight $w(E(T)) = \sum_{e \in E(T)} w(e)$.
We also assume that the input graph $G$ is connected as otherwise, our problems are trivially infeasible.
The following easy proposition is useful to extend a forest into a spanning tree.

\begin{proposition}\label{prop:forest_to_SPtree}
Let $G$ be a connected graph and $F$ be a forest in $G$.
Then there is a spanning tree that contains all edges in $F$.
\end{proposition}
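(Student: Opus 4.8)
The plan is to extend $F$ one edge at a time, using the connectivity of $G$ to guarantee that such an extension is always possible until we reach a spanning tree. The key idea is that adding an edge joining two distinct connected components of the current forest keeps the subgraph acyclic while strictly decreasing the number of components, so after finitely many steps the forest becomes connected and spanning.

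Concretely, I would first regard $F$ as a spanning forest of $G$: let $H_0 = (V(G), E(F))$, where every vertex of $G$ not incident to an edge of $F$ appears as an isolated vertex. Then $H_0$ is acyclic and contains every vertex of $G$, hence it is a spanning forest whose edge set is exactly $E(F)$. I would then iterate as follows. Given a spanning forest $H_i$ of $G$ with more than one connected component, pick one component $C$. Since $G$ is connected and $C \neq V(G)$, there is an edge $e \in E(G)$ with exactly one endpoint in $C$; its two endpoints thus lie in two different components of $H_i$. Setting $H_{i+1} = H_i + e$ (using the notation $G+e$) yields a subgraph that is still acyclic, because $e$ connects two previously separate components and therefore cannot close a cycle, and whose number of components is exactly one less than that of $H_i$.

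Finally I would argue termination and correctness: the number of connected components of $H_i$ is a positive integer that strictly decreases at each step, so the process halts after finitely many iterations with a spanning forest $H_t$ having a single component, i.e.\ a spanning tree of $G$. Since every step only adds edges and we started from $H_0$, we have $E(F) \subseteq E(H_t)$, so $H_t$ is the desired spanning tree containing all edges of $F$.

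The step requiring the most care is justifying that a component-joining edge always exists while $H_i$ is disconnected; this is precisely where connectivity of $G$ is used, and it is the only nontrivial ingredient. (Equivalently, one may phrase the whole argument via the graphic matroid of $G$: $E(F)$ is independent, the bases are exactly the spanning trees since $G$ is connected, and every independent set extends to a basis. However, the elementary component-counting argument above is self-contained and avoids invoking matroid theory.)
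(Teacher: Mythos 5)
Your proof is correct: the greedy component-joining argument is the standard proof of this fact, and the paper itself states the proposition without proof, treating it as an easy observation. Your write-up supplies exactly the argument the paper implicitly relies on (extend the forest edge by edge across components, using connectivity of $G$ to guarantee such an edge exists), so there is nothing to add or correct.
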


\noindent \textbf{Graph parameters.}
A set $S$ of vertices is called a \emph{vertex cover} if every edge has at least one endpoint in $S$. The vertex cover number $\vc(G)$ of $G$ is defined by the size of a minimum vertex cover in $G$. The \emph{vertex integrity} $\vi(G)$ of $G$ is the minimum integer \revised{$p$} satisfying that there exists $S\subseteq V$ such that \revised{$|S| + \max_{H\in \texttt{cc}(G-S)}|V(H)|\le p$}, where \revised{$\texttt{cc}(G-S)$ is the set of connected components of $G-S$}. The \emph{max leaf number} $\ml(G)$ of $G$ is the maximum integer \revised{$q$} such that there exists a spanning tree having \revised{$q$} leaves in $G$. For the definition of treewidth $\tw(G)$, pathwidth $\pw(G)$, and treedepth $\td(G)$, we refer the reader to the books~\cite{Nesetril:Sparsity:2012,CyganBook2015}.
It is well-known that these graph parameters have the following relationship.

\begin{proposition}[\cite{GraphParam,FLMMRS2009:maxleaf,GimaHKKO22}]
    For every graph $G$, it holds that $\tw(G)\le \pw(G)\le \td(G)-1\le \vi(G)-1\le \vc(G)$ and $\tw(G)\le \pw(G)\le 2\ml(G)$.
\end{proposition}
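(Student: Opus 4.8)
The statement is a chain of folklore comparisons between width/depth parameters together with the separate bound involving the max leaf number, so the plan is to verify the chain $\tw(G)\le\pw(G)\le\td(G)-1\le\vi(G)-1\le\vc(G)$ one inequality at a time, each by an explicit construction that turns a witness of one parameter into a witness of the next, and then to treat $\pw(G)\le 2\ml(G)$ separately by a structural argument on spanning trees.

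I would first dispatch the easy links. The inequality $\tw(G)\le\pw(G)$ is immediate, since any path decomposition is in particular a tree decomposition whose underlying tree is a path, so a minimum-width path decomposition already witnesses $\tw(G)\le\pw(G)$. For $\pw(G)\le\td(G)-1$, I would take an elimination forest $F$ of height $\td(G)$ realizing the treedepth, list its leaves in depth-first order, and assign to each leaf $\lambda$ the bag consisting of $\lambda$ together with all its ancestors in $F$. Each such bag has at most $\td(G)$ vertices; moreover every edge of $G$ joins an ancestor–descendant pair in $F$ and hence lies in some bag, while the bags containing a fixed vertex form a contiguous interval. This yields a valid path decomposition of width at most $\td(G)-1$.

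Next, for $\td(G)-1\le\vi(G)-1$, equivalently $\td(G)\le\vi(G)$, I would take a set $S$ witnessing $\vi(G)=|S|+\max_{H\in\texttt{cc}(G-S)}|V(H)|$, arrange $S$ as a single descending chain of ancestors, and hang below it an arbitrary elimination tree of each component $H$ of $G-S$. Since any graph on $m$ vertices admits an elimination forest of height at most $m$, each such subtree has height at most $\max_H|V(H)|$, so the whole forest has height at most $|S|+\max_H|V(H)|=\vi(G)$, and every edge of $G$ is covered because it is either incident to $S$ (handled by the top chain) or internal to one component (handled by that component's subtree). Finally, $\vi(G)-1\le\vc(G)$ holds because a minimum vertex cover $S$ leaves $G-S$ edgeless, so every component of $G-S$ is a single vertex and $\vi(G)\le|S|+1=\vc(G)+1$.

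The main obstacle is the remaining bound $\pw(G)\le 2\ml(G)$. Here I would fix a spanning tree $T$ of $G$ attaining the maximum number $\ell=\ml(G)$ of leaves, and use that a tree with $\ell$ leaves has at most $\ell-2$ vertices of degree at least $3$: the count $\sum_{v}(\deg_T(v)-2)=-2$ forces the branch vertices to contribute total excess $\ell-2$. Deleting these at most $2\ell-2$ leaf and branch vertices breaks $T$ into internally-degree-$2$ paths, so $T$ is a subdivision of a tree on at most $2\ell-2$ vertices. The delicate point is to control the \emph{non-tree} edges of $G$: one must argue from the maximality of the leaf count that $G$ cannot contain a chord joining vertices lying deep inside two distinct subdivided paths, nor two far-apart vertices of the same path, since such a chord could be used to reroute $T$ and expose an additional leaf. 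Establishing this locality, $G$ is itself essentially a subdivision of a graph on $O(\ell)$ vertices and edges, and since a subdivision of a graph with a bounded number of edges has correspondingly bounded pathwidth, a careful accounting of the constants yields $\pw(G)\le 2\ml(G)$. Making the rerouting claim precise with exactly this constant is where the real work lies.
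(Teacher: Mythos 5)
Your verification of the chain $\tw(G)\le\pw(G)\le\td(G)-1\le\vi(G)-1\le\vc(G)$ is correct and complete: the path-decomposition built from root-to-leaf bags of an elimination forest, the elimination forest obtained by stacking a chain on a vertex-integrity separator $S$ above chains for the components of $G-S$, and the observation that a vertex cover leaves only singleton components are all sound and are the standard arguments. (The paper itself only cites this proposition and gives no proof, so there is no in-paper argument to compare against.)

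The inequality $\pw(G)\le 2\ml(G)$, however, is not actually proved in your write-up, and the route you sketch would not succeed as described. You explicitly defer the key step (``making the rerouting claim precise \ldots is where the real work lies''), which already leaves a gap; worse, the intermediate claim that $G$ is ``essentially a subdivision of a graph on $O(\ell)$ vertices \emph{and edges}'' is false. For $G=K_n$ we have $\ml(G)=n-1$, no degree-two vertices to suppress, and $\binom{n}{2}=\Theta(\ml(G)^2)$ edges, so the edge count of the ``core'' cannot be bounded linearly in $\ml(G)$, and the final step ``a subdivision of a graph with a bounded number of edges has correspondingly bounded pathwidth'' cannot deliver the linear constant $2\ml(G)$ from a quadratic number of edges. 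The known proofs (e.g.\ in the cited work of Fellows et al.) do not pass through an edge-count bound at all: they construct a path decomposition (equivalently, a node-search strategy) directly from a spanning tree $T$ with at most $\ml(G)$ leaves, covering $V(G)$ by at most $\ml(G)$ root-to-leaf paths and keeping at most two vertices per path in each bag, which is where the factor $2$ comes from. You would need to replace your sketch for this last inequality with an argument of that form (or an explicit citation) for the proof to be complete.
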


\medskip

\noindent \textbf{Matroids.} Let $U$ be a finite set.
A pair $\mathcal M = (U, \mathcal B)$ with $\mathcal B \subseteq 2^U$ is called a \emph{matroid}\footnote{This definition is equivalent to that defined by the family of \emph{independent sets} \cite{Oxley_Matroid_2006}.} if the following axioms are satisfied:
\begin{itemize}
    \item $\mathcal B$ is nonempty;
    \item For $X, Y \in \mathcal B$ with $X \neq Y$ and for $x \in X \setminus Y$, there is $y \in Y \setminus X$ such that $(X \setminus \{x\}) \cup \{y\} \in \mathcal B$.
\end{itemize}
It is not hard to verify that all the sets in $\mathcal B$ have the same cardinality.

There are many combinatorial objects that can be represented as matroids.
Let $H = (V, E)$ be a graph.
If $\mathcal B_H$ consists of all subsets of edges, each of which forms a spanning tree in $H$, then the pair $(E, \mathcal B_H)$ is a matroid, which is called a \emph{graphic matroid}.
We are also interested in another matroid.
Let $U$ be a finite set and let \revised{$\{U_1, U_2, \ldots, U_t\}$} be a partition of $U$: $U_i \cap U_j = \emptyset$ for $i \neq j$ and \revised{$\bigcup_{1 \le i \le t}U_i = U$}.
For \revised{$1 \le i \le t$}, let $\ell_i$ and $u_i$ be two non-negative integers with $\ell_i \le u_i$.
For a non-negative integer $r$, we define a set $\mathcal B_r$ consisting of all size-$r$ subsets $U' \subseteq U$ such that $\ell_i \le |U' \cap U_i| \le u_i$ for \revised{$1 \le i \le t$}.
Then, the pair $(U, \mathcal B_r)$ is a matroid unless $\mathcal B_r = \emptyset$.

\begin{proposition}\label{prop:LUBmatroid}
    If $\mathcal B_r \neq \emptyset$, then $(U, \mathcal B_r)$ is a matroid.
\end{proposition}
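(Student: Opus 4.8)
The plan is to verify the two basis axioms directly against the definition given in the excerpt. The first axiom holds by hypothesis, since $\mathcal{B}_r \neq \emptyset$, and every member of $\mathcal{B}_r$ is by construction a size-$r$ subset, so all bases share the same cardinality; the real content therefore lies entirely in the exchange axiom.

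To establish it, I would fix $X, Y \in \mathcal{B}_r$ with $X \neq Y$ and an element $x \in X \setminus Y$, and let $U_i$ be the part of the partition containing $x$. The goal is to find $y \in Y \setminus X$ with $X' := (X \setminus \{x\}) \cup \{y\} \in \mathcal{B}_r$. Observe first that $|X'| = r$ automatically for any such $y$, since we delete one element of $X$ and add one element lying outside $X$; hence the only thing to control is the pair of bounds $\ell_j \le |X' \cap U_j| \le u_j$ over all parts $U_j$, and removing $x$ and adding $y$ changes only the counts of the parts containing $x$ and $y$.

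The argument then splits on where a usable $y$ can be located. If $(Y \setminus X) \cap U_i \neq \emptyset$, I pick $y$ from there: then $|X' \cap U_i| = |X \cap U_i|$ and every other part is unchanged, so $X' \in \mathcal{B}_r$ trivially. Otherwise $Y \cap U_i \subseteq X \cap U_i$, and combining this with $x \in (X \cap U_i) \setminus Y$ gives $|X \cap U_i| \ge |Y \cap U_i| + 1 \ge \ell_i + 1$, so deleting $x$ still leaves $|X' \cap U_i| \ge \ell_i$. It then remains to add some $y \in Y \setminus X$, which necessarily lies in a part $U_j$ with $j \neq i$, without violating the upper bound, i.e. to find such a $y$ with $|X \cap U_j| < u_j$.

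The crux, and the step I expect to be the main obstacle, is showing that such an unsaturated part must exist. I would argue by contradiction: if every part $U_j$ meeting $Y \setminus X$ satisfied $|X \cap U_j| = u_j$, then $|Y \cap U_j| \le u_j = |X \cap U_j|$ on those parts, while on the parts $U_j$ with $j \neq i$ that are disjoint from $Y \setminus X$ we have $Y \cap U_j \subseteq X \cap U_j$ and hence again $|Y \cap U_j| \le |X \cap U_j|$; summing yields $\sum_{j \neq i} |Y \cap U_j| \le \sum_{j \neq i} |X \cap U_j|$. But $|X| = |Y| = r$ together with the strict inequality $|X \cap U_i| > |Y \cap U_i|$ established above forces $\sum_{j \neq i} |Y \cap U_j| > \sum_{j \neq i} |X \cap U_j|$, a contradiction. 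This produces the required $y$ and completes the exchange axiom. The delicate point throughout is that the lower bound on $U_i$ and the upper bound on $U_j$ must be respected simultaneously; isolating the case $(Y \setminus X) \cap U_i = \emptyset$ is precisely what makes the lower bound automatic and reduces the whole matter to the counting argument.
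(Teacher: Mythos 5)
Your proof is correct and follows essentially the same route as the paper's: both verify only the exchange axiom, split on whether a replacement element can be taken from $x$'s own block $U_i$ (keeping all counts unchanged), and otherwise use the counting identity $\sum_j |X \cap U_j| = \sum_j |Y \cap U_j| = r$ together with $|X \cap U_i| > |Y \cap U_i|$ to locate a block of $Y \setminus X$ whose upper bound is not saturated. The only cosmetic difference is that you phrase this last step as a contradiction, while the paper directly exhibits an index $i'$ with $|X \cap U_{i'}| < |Y \cap U_{i'}|$.
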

\begin{proof}
    It suffices to show that $\mathcal B_r$ satisfies the second axiom of matroids.
    Let $X, Y \in \mathcal B_r$ with $X \neq Y$.
    For \revised{$1 \le i \le t$}, let $x_i = |X \cap U_i|$ (resp. $y_i = |Y \cap U_i|$).
    Let $x \in X \setminus Y$ and assume that $x \in U_i$.
    Suppose first that $x_i \le y_i$.
    Since $x \in X \setminus Y$ and $|X \cap U_i| \le |Y \cap U_i|$, there is $y \in U_i$ such that $y \in Y \setminus X$.
    Then, we have $Z = (X \setminus \{x\}) \cup \{y\} \in \mathcal B_r$ as $|Z \cap U_j| = |X \cap U_j|$ for \revised{$1 \le j \le t$}.
    Suppose next that $x_i > y_i$.
    As $\sum_{1 \le j \le k}x_j = \sum_{1 \le j \le k}y_j = r$, there is an index $i'$ with $x_{i'} < y_{i'}$.
    This implies that $U_{i'}$ contains an element $y \in Y \setminus X$.
    Then, $Z = (X \setminus \{x\}) \cup \{y\} \in \mathcal B_r$ as $\ell_i \le y_i \le x_i - 1 = |Z \cap U_i|$, $|Z \cap U_{i'}| = x_{i'}+1 \le y_{i'} \le u_{i'}$, and $|Z \cap U_{j}| = |X \cap U_j|$ for all $j' \neq i, i'$.
\end{proof}


Let $\mathcal M_1 = (U, \mathcal B_1)$ and $\mathcal M_2 = (U, \mathcal B_2)$ be matroids.
It is well known that there is a polynomial-time algorithm to check whether $\mathcal B_1 \cap \mathcal B_2$ is empty when set $U$ and an \emph{independence oracle} is given as input (e.g., \cite{Frank1981}), where an independence oracle for a matroid $\mathcal M = (U, \mathcal B)$ is a black-box procedure that given a set $U' \subseteq U$, returns true if there is a set $B \in \mathcal B$ that contains $U'$.
Moreover, we can find a minimum weight common base in two matroids in polynomial time.
\begin{theorem}[\cite{Frank1981}]\label{thm;matroid_intersect}
    Let $U$ be a finite set and let $w\colon U \to \mathbb R$.
    Given two matroids $\mathcal M_1 = (U, \mathcal B_1)$ and $\mathcal M_2 = (U, \mathcal B_2)$, we can compute a set $X \in \mathcal B_1 \cap \mathcal B_2$ minimizing $w(X)$ in polynomial time, provided that the polynomial-time independence oracles of $\mathcal M_1$ and $\mathcal M_2$ are given as input.
\end{theorem}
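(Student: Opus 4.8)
The plan is to prove the statement by reducing it to the weighted matroid intersection problem and solving that with the classical augmenting-path framework. Since $\mathcal B_i$ is the set of bases of $\mathcal M_i$, a set is \emph{independent} in $\mathcal M_i$ precisely when it is contained in some member of $\mathcal B_i$; hence the oracle described in the excerpt is exactly an independence oracle, and I write $\mathcal I_i = \{X' \subseteq U : X' \subseteq B \text{ for some } B \in \mathcal B_i\}$. A set in $\mathcal B_1 \cap \mathcal B_2$ is a common base, so it suffices to compute, for each cardinality $r$, a minimum weight common independent set of size $r$ in $\mathcal I_1 \cap \mathcal I_2$, and then read off the answer at $r$ equal to the common rank (reporting that $\mathcal B_1 \cap \mathcal B_2 = \emptyset$ if no common independent set of that size exists).

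First I would settle the cardinality version. Given a common independent set $X \in \mathcal I_1 \cap \mathcal I_2$, build the directed \emph{exchange graph} $D_X$ on vertex set $U$, with an arc $x \to y$ whenever $x \in X$, $y \notin X$, and $(X \setminus \{x\}) \cup \{y\} \in \mathcal I_1$, and an arc $y \to x$ whenever that same set lies in $\mathcal I_2$. Designate as sources the elements $y \notin X$ with $X \cup \{y\} \in \mathcal I_1$, and as sinks those with $X \cup \{y\} \in \mathcal I_2$. The key lemma is that if $P$ is a shortest (fewest-arc) source-to-sink path in $D_X$, then $X \mathbin{\triangle} V(P)$ is again a common independent set, of size $|X| + 1$; conversely, if no such path exists, then $X$ already has maximum cardinality. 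Each exchange graph is constructed with $O(|U|^2)$ oracle calls, and at most $|U|$ augmentations are needed, which yields a polynomial-time algorithm for a maximum common independent set.

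To incorporate weights I would maintain the invariant that the current $X$ is a minimum weight common independent set among all common independent sets of size $|X|$. Assign to each vertex a length equal to $w(y)$ for $y \notin X$ and $-w(x)$ for $x \in X$, so that the total length of a source-to-sink path equals the change in weight incurred by augmenting along it. I would then augment along a source-to-sink path of \emph{minimum total length}, breaking ties in favor of the fewest arcs. Starting from $X = \emptyset$ and iterating until $|X|$ reaches the common rank produces a minimum weight common base; taking the minimum over the final augmentation realizes the requested element of $\mathcal B_1 \cap \mathcal B_2$, and infeasibility is detected exactly when the required augmenting path fails to exist.

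The crux, and the step I expect to be the main obstacle, is proving that augmenting along a minimum-length, fewest-arc path preserves this invariant. This rests on two facts whose verification is the technical heart of the argument: (i) under the invariant, the exchange graph $D_X$ equipped with the above lengths contains no negative-length directed cycle; and (ii) a minimum-length source-to-sink path admits no ``shortcut'' chords, so that the symmetric difference $X \mathbin{\triangle} V(P)$ decomposes into the path together with vertex-disjoint cycles, each of nonnegative length. Establishing both requires an exchange/uncrossing argument that plays the base-exchange axioms of $\mathcal M_1$ and $\mathcal M_2$ against each other. I expect this to be the genuine difficulty, whereas the remaining bookkeeping — the $O(|U|^2)$ oracle cost per exchange graph, the $O(|U|)$ iteration bound, and the passage from an independent set to a base — is routine.
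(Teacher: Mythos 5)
The paper does not prove this statement at all: it is imported as a known result of Frank (1981) on weighted matroid intersection, and the only things the paper itself verifies are that the two specific matroids it later feeds into the theorem admit polynomial-time independence oracles (via Proposition~\ref{prop:forest_to_SPtree} and Lemma~\ref{lem:ind-oracle}). So there is no in-paper argument to compare yours against; the benchmark is the classical literature.

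Your outline follows the standard augmenting-path route (Lawler, Iri--Tomizawa, Frank), and every structural choice is the right one: the exchange graph with its two arc types, the sources and sinks, the vertex lengths $w(y)$ for $y \notin X$ and $-w(x)$ for $x \in X$, augmentation along a minimum-length path with fewest arcs, and the reduction from common bases to minimum-weight common independent sets of each cardinality. You also correctly observe that the oracle described in the paper (``is $U'$ contained in some member of $\mathcal B$'') is precisely an independence oracle for the matroid of bases $\mathcal B$. However, as a proof the proposal is incomplete by your own admission: the two facts you defer --- (i) that the minimality invariant forces $D_X$ to have no negative-length directed cycle, and (ii) that a minimum-length, fewest-arc augmenting path admits no shortcut, so that $X \mathbin{\triangle} V(P)$ is again a common independent set and remains minimum-weight for its cardinality --- are not peripheral technicalities but the entire mathematical content of the theorem. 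The parts you do establish (oracle counts, the $O(|U|)$ iteration bound, passing from independent sets to bases) are routine. What you have is a correct plan with the proof of its two load-bearing lemmas still owed; since the paper only ever uses the theorem as a black box, the appropriate resolution is to cite Frank's paper (or a textbook treatment of weighted matroid intersection) for those lemmas rather than to reconstruct them.
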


Finally, we observe that there are polynomial-time independence oracles for graphic matroids and $(U, \mathcal B_r)$.
By Proposition~\ref{prop:forest_to_SPtree}, the independence oracle for the graphic matroid defined by a connected graph $G$ returns true for given $F \subseteq E(G)$ if and only if $F$ is a forest.
The following lemma gives an independence oracle for $(U, \mathcal B_r)$.

\begin{lemma}\label{lem:ind-oracle}
    There is a polynomial-time algorithm that given a set $U' \subseteq U$, determines whether there is a set $U'' \in \mathcal B_r$ with $U' \subseteq U'' \subseteq U$, that is, $|U''| = r$ and $\ell_i \le |U'' \cap U_i| \le u_i$ for all $1 \le i \le k$.
\end{lemma}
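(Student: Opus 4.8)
The plan is to reduce the extension question to a purely numerical condition on each part of the partition. Given the query set $U' \subseteq U$, write $a_i = |U' \cap U_i|$ for each $i$. Since any candidate $U''$ with $U' \subseteq U'' \subseteq U$ is obtained from $U'$ only by \emph{adding} elements, and each added element lies in exactly one part, the intersection size $c_i := |U'' \cap U_i|$ must satisfy $a_i \le c_i \le |U_i|$. Combining this with the required constraint $\ell_i \le c_i \le u_i$, the attainable values of $c_i$ are exactly the integers in the interval $I_i := [\max(\ell_i, a_i),\ \min(u_i, |U_i|)]$. In particular, if $I_i = \emptyset$ for some $i$ (equivalently, if $a_i > u_i$ or $\ell_i > |U_i|$), then no valid extension exists and the algorithm can immediately report failure.

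Next I would exploit the fact that the parts $U_1, \dots, U_t$ are pairwise disjoint, so the counts $c_1, \dots, c_t$ may be chosen independently within their respective intervals $I_i$, and every such choice yields a legitimate set $U'' \supseteq U'$ realizing those counts. Writing $L = \sum_i \max(\ell_i, a_i)$ and $H = \sum_i \min(u_i, |U_i|)$, the total size $|U''| = \sum_i c_i$ then ranges over exactly the integers of $[L, H]$. Consequently, a set $U'' \in \mathcal B_r$ with $U' \subseteq U''$ exists if and only if every $I_i$ is nonempty and $L \le r \le H$. The algorithm therefore computes the $a_i$, checks nonemptiness of each $I_i$, evaluates $L$ and $H$, and tests $L \le r \le H$, all in time polynomial in $|U|$.

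The correctness is precisely the equivalence above. The necessity direction is immediate from the definitions: any feasible $U''$ forces $I_i \neq \emptyset$ for every $i$ and $L \le |U''| = r \le H$. The only step that genuinely needs an argument is the sufficiency claim that $\sum_i c_i$ realizes \emph{every} integer in $[L, H]$, and I do not expect this to be an obstacle. It is the standard fact that the sum of finitely many contiguous integer intervals is again a contiguous integer interval: start from the assignment attaining $L$ (each $c_i = \max(\ell_i, a_i)$) and repeatedly increase a single coordinate by one until its upper endpoint is reached before moving to the next, which sweeps through all intermediate totals up to $H$. This one-line adjustment argument closes the proof, and nothing beyond elementary counting is required.
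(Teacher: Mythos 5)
Your proposal is correct, and it takes a genuinely different and more elementary route than the paper. The paper reduces the extension problem to finding a degree-constrained subgraph with a prescribed number of edges in an auxiliary bipartite graph (one side for the parts $U_i$, one side for the elements of $U \setminus U'$), and then invokes a known polynomial-time algorithm for that problem as a black box. You instead observe that, because $\{U_1,\dots,U_t\}$ is a partition, the choice of $c_i = |U'' \cap U_i|$ decomposes into independent per-part decisions, each constrained to the integer interval $\bigl[\max(\ell_i, a_i),\ \min(u_i, |U_i|)\bigr]$, and that the set of achievable totals $\sum_i c_i$ is the contiguous interval $[L,H]$; feasibility then reduces to checking nonemptiness of each interval and $L \le r \le H$. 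Your verification of the two directions is sound: necessity is immediate, and sufficiency uses both that the interval sum sweeps every value in $[L,H]$ and that any prescribed $c_i \ge a_i$ with $c_i \le |U_i|$ is realizable by adding $c_i - a_i$ elements of $U_i \setminus U'$ to $U' \cap U_i$ (you state this implicitly; it is worth one explicit sentence). What your approach buys is a self-contained, purely arithmetic algorithm with no external citation and an obviously linear-time feasibility check; what the paper's approach buys is essentially nothing extra here, since its bipartite graph has all degree-one vertices on the element side and thus encodes exactly the same per-part counting problem in a heavier formalism. Your argument would be a legitimate simplification of the paper's proof.
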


\begin{proof}
    The problem can be reduced to that of finding a degree-constrained subgraph of an auxiliary bipartite graph.
    To this end, we construct a bipartite graph $H$ as follows.
    The graph $H$ consists of two independent sets $A$ and $B$, where each vertex $a_i$ of $A$ corresponds to a block $U_i$ of the partition and each vertex $b_e$ of $B$ corresponds to an element $e$ in $U \setminus U'$.
    For each $e \in U \setminus U'$, $H$ contains an edge between $b_e$ and $a_i$, where $i$ is the unique index with $e \in U_i$.
    Then, we define functions $d^\ell \colon A \cup B \to \mathbb N$ and $d^u \colon A \cup B \to \mathbb N$ as:
    \begin{align*}
        d^\ell(v) = \begin{cases}
            0 & \text{ if } v \in B\\
            \ell_i - |U' \cap U_i| & \text{ if } v \in A
        \end{cases}\hspace{0.4cm}
        d^u(v) = \begin{cases}
            1 & \text{if } v \in B\\
            u_i - |U' \cap U_i| & \text{if } v \in A
        \end{cases}
    \end{align*}
    Then, we claim that there is a set $U''$ satisfying the condition in the statement of this lemma if and only if $H$ has a subgraph $H'$ such that $d^\ell(v) \le d_{H'}(v) \le d^{u}(v)$ for all $v \in V(H)$, where $d_{H'}(v)$ is the degree of $v$ in $H'$, and $H'$ contains exactly $r - |U'|$ edges.
    From a feasible set $U'' \subseteq U$, we construct a subgraph $H'$ in such a way that for each $b_e \in B$ we take the unique edge $(a_i, b_e)$ if $e \in U'' \setminus U'$.
    Clearly, this subgraph $H'$ contains $r - |U'|$ edges.
    Moreover, $H'$ satisfies the degree condition as 
    \begin{align*}
    d_{H'}(a_i) = |(U'' \setminus U') \cap U_i| = |U'' \cap U_i| - |U' \cap U_i| \ge \ell_i - |U' \cap U_i| = d^\ell(a_i),
    \end{align*}
    and, analogously, we have $d_{H'}(a_i) \le d^u(a_i)$.
    This transformation is reversible and hence the converse direction is omitted here.
    
    Given a graph $H$, an integer $r'$, and the degree bounds $d^\ell, d^u$, there is a polynomial-time algorithm that finds a subgraph $H'$ with $r'$ edges satisfying $d^\ell(v) \le d_{H'}(v) \le d^u(v)$ for $v \in V(H)$~\cite{Gabow83:degree-subgraph,Shiloach81:degree-subgraph}.
    \revised{Hence}, the lemma follows.
\end{proof}

\section{Kernelization}
\subsection{Linear kernel for $k$}

In this subsection, we give a linear vertex kernel of {\STNT} when parameterized by $k = |\NT|$.
Let $G$ be a graph with a non-terminal set $\NT \subseteq V(G)$.
The following easy lemma is a key to our results.

\begin{lemma}\label{lem:contraction}
There is an admissible spanning tree for $\NT$ in $G$ if and only if there is an admissible forest for $\NT$ in $G[N[\NT]]$.
\end{lemma}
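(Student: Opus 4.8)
The plan is to prove the two implications separately, with the backward direction resting entirely on Proposition~\ref{prop:forest_to_SPtree} and the forward direction being a direct extraction.

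For the forward direction, I would start from an admissible spanning tree $T$ for $\NT$ in $G$ and define $F$ to be the set of edges of $T$ having at least one endpoint in $\NT$. For any such edge $uv$ with $v \in \NT$ we have $u \in N(v) \subseteq N[\NT]$ and $v \in \NT \subseteq N[\NT]$, so both endpoints lie in $N[\NT]$; hence $F$ is a subgraph of $G[N[\NT]]$. As a subgraph of the tree $T$ it is acyclic, so $F$ is a forest in $G[N[\NT]]$. The key point is that for each $v \in \NT$ \emph{every} edge of $T$ incident to $v$ is retained in $F$, so $\deg_F(v) = \deg_T(v) \ge 2$, and therefore $F$ is admissible for $\NT$.

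For the backward direction, let $F$ be an admissible forest for $\NT$ in $G[N[\NT]]$. Since $G[N[\NT]]$ is a subgraph of $G$, the forest $F$ is also a forest in $G$, and because $G$ is connected, Proposition~\ref{prop:forest_to_SPtree} yields a spanning tree $T$ of $G$ containing all edges of $F$. As adding edges can only increase degrees, $\deg_T(v) \ge \deg_F(v) \ge 2$ for every $v \in \NT$, so $T$ is admissible for $\NT$, as desired.

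The whole argument is short, and the two directions are essentially symmetric reductions. I do not expect a genuine obstacle; the only point needing care is the forward direction, where one must verify simultaneously that restricting $T$ to the edges touching $\NT$ keeps the forest inside $G[N[\NT]]$ and preserves every $\NT$-vertex degree. Both follow immediately from the definition of the closed neighborhood and from the observation that the discarded edges are exactly those with both endpoints outside $\NT$, which cannot be incident to any vertex of $\NT$ and hence do not affect its degree.
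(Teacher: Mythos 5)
Your proof is correct and follows essentially the same route as the paper: restrict the spanning tree to a degree-preserving forest inside $G[N[\NT]]$ for the forward direction, and invoke Proposition~\ref{prop:forest_to_SPtree} for the converse. The only cosmetic difference is that you keep exactly the edges of $T$ touching $\NT$ whereas the paper keeps all edges of $T$ with both endpoints in $N[\NT]$; both choices preserve the $\NT$-degrees and yield a forest in the induced subgraph.
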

\begin{proof}
Let $T$ be an admissible spanning tree for $\NT$ in $G$.
We remove vertices in $V(G) \setminus N[\NT]$ from $T$.
Since this does not change the degree of any vertex in $\NT$, we have an admissible forest in $G[N[\NT]]$.

Conversely, suppose that there is an admissible forest $F$ for $\NT$ in $G[N[\NT]]$. This forest is indeed a forest in $G$. By Proposition \ref{prop:forest_to_SPtree}, there is a spanning tree of $G$ that contains $F$ as a subgraph. This spanning tree is admissible for $\NT$ in $G$ as the degree of any vertex in $\NT$ is at least $2$ in $F$.
\end{proof}

Let $G'$ be the graph obtained from $G$ by deleting all the vertices in $V(G)\setminus N[\NT]$, adding a vertex $r$, and connecting $r$ to all the vertices in $N(\NT)$.
From the assumption that $G$ is connected, $G'$ is also connected.
By Lemma \ref{lem:contraction}, we can immediately obtain the following corollary.

\begin{corollary}\label{cor:contraction}
There is an admissible spanning tree for $\NT$ in $G$ if and only if there is an admissible spanning tree for $\NT$ in $G'$.
\end{corollary}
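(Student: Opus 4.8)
The plan is to derive the corollary directly from Lemma~\ref{lem:contraction} by observing that $G'$ differs from $G[N[\NT]]$ only by the extra vertex $r$, which is attached solely to $N(\NT) = N[\NT] \setminus \NT$ and therefore has no neighbor inside $\NT$. This single structural fact will let me transfer admissibility freely between a forest in $G[N[\NT]]$ and a spanning tree in $G'$, since the degrees of the vertices in $\NT$ are never affected by anything involving $r$.

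For the forward direction I would start from an admissible spanning tree for $\NT$ in $G$, invoke Lemma~\ref{lem:contraction} to obtain an admissible forest $F$ for $\NT$ in $G[N[\NT]]$, and then note that $F$ is still a forest in $G'$, as every vertex and edge of $F$ survives the construction. Since $G'$ is connected, Proposition~\ref{prop:forest_to_SPtree} extends $F$ to a spanning tree $T'$ of $G'$; because extending a forest only adds edges, every vertex of $\NT$ keeps degree at least $2$, so $T'$ is admissible for $\NT$ in $G'$.

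For the converse I would take an admissible spanning tree $T'$ for $\NT$ in $G'$ and simply delete $r$. The result $T' - r$ is a forest living entirely inside $G[N[\NT]]$, and here is where the placement of $r$ matters: since $r$ is adjacent only to $N(\NT)$, removing it leaves the degree of every vertex of $\NT$ unchanged, so each such vertex retains degree at least $2$ and $T' - r$ is an admissible forest for $\NT$ in $G[N[\NT]]$. Applying Lemma~\ref{lem:contraction} in the reverse direction then returns an admissible spanning tree for $\NT$ in $G$.

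I do not expect any genuine obstacle here; the only point requiring care is the converse, where one must confirm that detaching $r$ cannot push a non-terminal below degree $2$. This is exactly what the construction of $G'$ guarantees by joining $r$ only to $N(\NT)$ and never to $\NT$ itself, so the verification reduces to pure bookkeeping.
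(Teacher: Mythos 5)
Your proposal is correct and follows exactly the route the paper intends: the paper derives the corollary ``immediately'' from Lemma~\ref{lem:contraction}, and your argument simply spells out the two transfers (forest in $G[N[\NT]]$ extended to a spanning tree of the connected graph $G'$ via Proposition~\ref{prop:forest_to_SPtree}, and deletion of $r$ from a spanning tree of $G'$, which is harmless to $\NT$-degrees since $r$ is only adjacent to $N(\NT)$). No gaps.
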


%


Due to Corollary \ref{cor:contraction}, we can ``safely'' reduce $G$ to $G'$\revised{, that is, $G$ has an admissible spanning tree for $\NT$ if and only if $G'$ does.}
The graph $G'$ may still have an arbitrary number of vertices as the size of $N_{G'}(\NT)$ cannot be upper bounded by a function in $k$.
To reduce this part, we apply the expansion lemma \cite{Fomin_Hitting_2016}.

\begin{definition}\label{def:q-expansion}
Let $H=(A\cup B, \revised{E_H})$ be a bipartite graph. For positive integer $q$, $M\subseteq \revised{E_H}$ is called a \emph{$q$-expansion} of $A$ into $B$ if it satisfies the following:
\begin{itemize}
    \item every vertex in $A$ is incident to exactly $q$ edges in $M$.
    \item there are exactly $q|A|$ vertices in $B$, each of which is incident to exactly one edge in $M$.
\end{itemize}
\end{definition}

\begin{lemma}[Expansion lemma \cite{Fomin_Hitting_2016,Thomasse10}]\label{lem:expansion}
Let $q$ be a positive integer and $H=(A\cup B, \revised{E_H})$ be a bipartite graph such that:
\begin{itemize}
    \item $|B|\ge q|A|$, and
    \item there \revised{are no isolated vertices} in $B$.
\end{itemize}
Then there are non-empty vertex \revised{sets} $X\subseteq A$ and $Y\subseteq B$ such that:
\begin{itemize}
    \item $N_H(Y)\subseteq X$, and
    \item there is a $q$-expansion $M \subseteq \revised{E_H}$ of $X$ into $Y$.
\end{itemize}
Moreover, one can find such $X$, $Y$, and $M$ in polynomial time in the size of $H$.
\end{lemma}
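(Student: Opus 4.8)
The plan is to reduce the statement to the classical deficiency version of Hall's theorem by passing to an auxiliary bipartite graph in which each vertex of $A$ is cloned $q$ times. Concretely, I would build $H' = (A' \cup B, E_{H'})$, where $A' = \{a^1, \dots, a^q : a \in A\}$ consists of $q$ copies of every vertex of $A$, and $a^j$ is joined to $b \in B$ exactly when $ab \in E_H$. Since $|B| \ge q|A| = |A'|$, the natural question is whether $H'$ admits a matching saturating $A'$, and I would split the argument according to the answer. A matching saturating $A'$ in $H'$ corresponds precisely to a $q$-expansion of $A$ into $B$ in $H$ (each original vertex receives $q$ private partners in $B$), so this is exactly the object we are after.

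In the first case, suppose $H'$ has a matching $M'$ saturating $A'$. Collapsing the clones back to their originals turns $M'$ into an edge set $M \subseteq E_H$ that is a $q$-expansion of $A$ into $B$; let $Y \subseteq B$ be the $q|A|$ vertices it uses. Taking $X = A$, the neighborhood condition $N_H(Y) \subseteq X$ holds trivially because every vertex of $B$ has all of its neighbors in $A$. Hence $X = A$, $Y$, and $M$ witness the lemma.

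In the second case, $H'$ has no matching saturating $A'$, so by Hall's theorem there is a set $Z \subseteq A$ with $|N_H(Z)| < q|Z|$ (take a Hall violator in $H'$ and close it under cloning, which is harmless since clones share neighborhoods, so that $N_{H'}$ of the clones of $Z$ equals $N_H(Z)$). I would first record two facts. Such $Z$ is nonempty, since the empty set never violates Hall. Moreover $Z \ne A$: as $B$ has no isolated vertices, $N_H(A) = B$, so $Z = A$ would force $|B| < q|A|$, contradicting the hypothesis. I would then recurse on the induced bipartite graph $H''$ on $(A \setminus Z) \cup (B \setminus N_H(Z))$. Its hypotheses are inherited: counting gives $|B \setminus N_H(Z)| = |B| - |N_H(Z)| > q|A| - q|Z| = q|A \setminus Z|$, and every $b \in B \setminus N_H(Z)$ still has a neighbor outside $Z$ (otherwise $b \in N_H(Z)$), so $B \setminus N_H(Z)$ has no isolated vertex in $H''$. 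Since $\emptyset \ne Z \ne A$, we have $|A \setminus Z| < |A|$, so induction on $|A|$ applies and yields $X \subseteq A \setminus Z$, $Y \subseteq B \setminus N_H(Z)$, and a $q$-expansion $M$ of $X$ into $Y$ inside $H''$.

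The step I expect to need the most care is checking that the objects returned by the recursion remain valid in the original graph $H$, i.e. that deleting $Z$ cannot hide any neighbor. The point is that $Y \subseteq B \setminus N_H(Z)$ means no vertex of $Y$ is adjacent to $Z$, so $N_H(Y) = N_{H''}(Y) \subseteq X$; and $M$ is literally a set of edges of $H$, so it remains a $q$-expansion of $X$ into $Y$ there. For the running time, every recursive call computes one maximum matching and, when it fails to saturate $A'$, extracts a Hall violator from the K\"onig vertex cover of that matching, both in polynomial time; since $|A|$ strictly drops at each second-case step, there are at most $|A|$ calls, giving an overall polynomial-time algorithm that produces $X$, $Y$, and $M$.
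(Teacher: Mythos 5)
The paper does not prove this lemma at all: it is imported verbatim from the literature with citations to Thomass\'e and to Fomin et al., so there is no in-paper argument to compare against. Judged on its own, your proof is correct and essentially reconstructs the standard matching-based argument behind the cited result. The cloning reduction to a matching saturating $A'$ is sound (note $q\le |B|/|A|$ by hypothesis, so $H'$ has polynomial size); the extraction of a Hall violator $Z'$ from a K\"onig vertex cover and its closure under cloning correctly yields $Z\subseteq A$ with $|N_H(Z)|<q|Z|$, and your two observations ($Z\neq\emptyset$ and $Z\neq A$, the latter using that $B$ has no isolated vertices so $N_H(A)=B$) are exactly what makes the recursion well-founded. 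The inheritance of both hypotheses by $H''$ and the lifting of $(X,Y,M)$ back to $H$ (using $Y\cap N_H(Z)=\emptyset$ to get $N_H(Y)=N_{H''}(Y)\subseteq X$) are verified correctly, and the running-time accounting (at most $|A|$ maximum-matching computations) is fine. The only difference from the usual presentation is cosmetic: the textbook proof typically obtains $X$ and $Y$ in one shot from the alternating-path structure of a single maximum matching, whereas you peel off Hall violators recursively; your version is slightly less efficient but easier to verify, and it is a complete proof of the statement as used in the paper (the application only needs $q=2$).
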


\revised{Suppose that $N_{G'}(\NT)$ has at least $2k$ vertices.}
Let $H = (\NT \cup N_{G'}(\NT), \revised{E_H})$ be the bipartite graph obtained from $G'[N_{G'}[\NT]]$ by deleting all edges between vertices in $\NT$ and those between vertices in $N_{G'}(\NT)$.
As $N_{G'}(\NT)$ has no isolated vertices in $H$, by Lemma~\ref{lem:expansion}, there exists a $2$-expansion $M \subseteq \revised{E_H}$ of $X$ into $Y$ for some non-empty \revised{sets} $X \subseteq \NT$ and $Y \subseteq N_{G'}(\NT)$ such that $N_H(Y) \subseteq X$.
We then construct a smaller instance $(\hat{G}, \hat{V}_{\rm NT})$ of ${\STNT}$ as follows.
We first delete all vertices in $Y$ from $G'$ and add an edge between $r$ and $x$ for each $x \in X$.
The graph obtained in this way is denoted by $\hat{G}$.
Then we set $\hat{V}_{\rm NT} = \NT \setminus X$.
See \Cref{fig:reduction} for an illustration.

\begin{figure}[t]
    \centering
    \includegraphics[width=\textwidth]{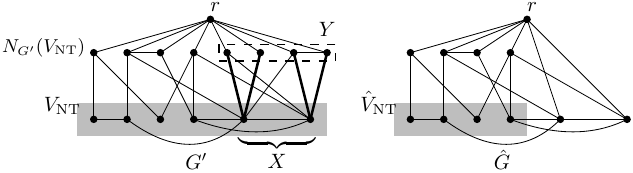}
    \caption{The construction of $\hat{G}$. The shaded areas indicate the vertices of $\NT$ and $\hat{V}_{\rm NT}$.
    The bold lines are edges in a $2$-expansion of $X$ into $Y$.}
    \label{fig:reduction}
\end{figure}

Observe that $\hat{G}$ is also connected.
This follows from the fact that every vertex in $(N_{G'}(\NT) \cup X) \setminus Y$ is adjacent to $r$ and every vertex $v$ in $\hat{V}_{\rm NT}$ is adjacent to some vertex of $(N_{G'}(\NT) \cup X) \setminus Y$ in $\hat{G}$ as otherwise $v \in \hat{V}_{\rm NT}$ is adjacent to a vertex in $Y$, which violates the fact that $N_H(Y) \subseteq X$.

\begin{lemma}\label{lem:deletion}
There exists an admissible spanning tree for $\NT$ in $G'$ if and only if there exists an admissible spanning tree for $\hat{V}_{\rm NT}$ in $\hat{G}$.
\end{lemma}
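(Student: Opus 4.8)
The plan is to establish both implications with a single template: from a given admissible spanning tree, carve out an admissible \emph{forest} in the target graph, and then invoke Proposition~\ref{prop:forest_to_SPtree} to grow it back into an admissible spanning tree.

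For the forward direction, suppose $T$ is admissible for $\NT$ in $G'$. I would delete every vertex of $Y$ from $T$, obtaining a forest $F$ on $V(\hat G) = V(G') \setminus Y$. The point is that no vertex of $\NT \setminus X$ is adjacent in $G'$ to a vertex of $Y$: such an edge would be an edge of $G'$ joining $\NT$ to $N_{G'}(\NT)$, hence an edge of $H$, forcing its $\NT$-endpoint into $N_H(Y) \subseteq X$. Thus deleting $Y$ leaves the degree of every vertex of $\hat{V}_{\rm NT} = \NT \setminus X$ unchanged, so $F$ is admissible for $\hat{V}_{\rm NT}$. Since $F$ uses only edges of $G' - Y$, which survive in $\hat G$, and $\hat G$ is connected, Proposition~\ref{prop:forest_to_SPtree} extends $F$ to a spanning tree $\hat T$ of $\hat G$; degrees only increase, so $\hat T$ is the required admissible spanning tree.

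For the backward direction, suppose $\hat T$ is admissible for $\hat{V}_{\rm NT}$ in $\hat G$. I would first delete from $\hat T$ all of the artificial edges $rx$ with $x \in X$ (these are exactly the edges of $\hat G$ absent from $G'$), leaving a forest $F'$ on $V(\hat G)$ whose edges all lie in $G'$; since this removes only edges incident to $r$ and to $X$, every vertex of $\NT \setminus X$ retains degree at least $2$. Next I would reinsert $Y$: the $2$-expansion $M$ assigns to each $x \in X$ two private neighbours $y^x_1, y^x_2 \in Y$ that are distinct across different $x$ and together exhaust $Y$, and I attach both as pendant leaves of $x$. The resulting graph $F''$ is a forest in $G'$ spanning $V(G')$, in which every $x \in X$ now has degree at least $2$ while the degrees of $\NT \setminus X$ are untouched; hence $F''$ is admissible for $\NT$, and Proposition~\ref{prop:forest_to_SPtree} extends it to the desired admissible spanning tree $T$ of $G'$.

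I expect the main obstacle to be the backward direction, where $Y$ and the edges $rx$ play a subtle double role: the vertices of $X$ carry no degree constraint in $\hat G$ yet must become internal in $G'$, and $\hat T$ may route connectivity through the edges $rx$ that do not exist in $G'$. The $2$-expansion is exactly the device that resolves both issues at once---deleting the $rx$ edges is harmless for the non-terminals because they are incident only to $r$ and $X$, and the two private $Y$-neighbours of each $x$ simultaneously restore $x$'s degree to $2$ and account for all of the deleted set $Y$ when reinserted as leaves. Verifying that these pendant attachments keep the object acyclic and that $Y$ is covered exactly once are the routine checks that make the argument go through.
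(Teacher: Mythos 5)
Your proof is correct and follows essentially the same route as the paper's: the forward direction hinges on $N_H(Y)\subseteq X$ implying no edges between $\hat{V}_{\rm NT}$ and $Y$, the backward direction combines the surviving edges of $\hat{T}$ with the $2$-expansion $M$ to restore degree $2$ at each $x\in X$, and both directions finish by extending an admissible forest via Proposition~\ref{prop:forest_to_SPtree}. The only cosmetic difference is that you carry the whole forest $T-Y$ (resp.\ $\hat{T}$ minus the edges $rx$) where the paper keeps only the two edges incident to each non-terminal, which changes nothing.
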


\begin{proof}
Suppose that there exists an admissible spanning tree $T$ for $\NT$ in $G'$.
Since $T$ is admissible for $\NT$, for every $v \in \NT \setminus X$, $T$ has at least two edges incident to $v$.
These edges are contained in $\hat{G}$ as there are no edges between $Y$ and $\NT \setminus X$.
Thus, there is an admissible forest for $\hat{V}_{\rm NT}$ in $\hat{G}$\revised{, consisting of these two incident edges for each $v \in \NT \setminus X$}, and by Proposition~\ref{prop:forest_to_SPtree}, $\hat{G}$ has an admissible spanning tree for $\hat{V}_{\rm NT}$.

Conversely, let $\hat{T}$ be an admissible spanning tree for $\hat{V}_{\rm NT}$ in $\hat{G}$.
To construct a spanning tree of $G'$, we select all edges of $\hat{T}$ incident to $\hat{V}_{\rm NT}$. 
These edges are \revised{also} contained in $G'$.
Moreover, we select all edges in the 2-expansion $M$ of $X$ into $Y$.
Since there are no edges between $\hat{V}_{\rm NT}$ and $Y$ in $G'$, the subgraph consisting of selected edges does not have cycles.
Moreover, as every vertex of $\NT$ is of degree at least $2$ in the subgraph, it is an admissible forest for $\NT$ in $G'$\revised{.}
By Proposition \ref{prop:forest_to_SPtree}, there is an admissible spanning tree for $\NT$ in $G'$.
\end{proof}


Our kernelization is described as follows.
From a connected graph $G$, we first construct the graph $G'$ by deleting the vertices in $V(G) \setminus N[\NT]$ and adding $r$ adjacent to each vertex in $N(\NT)$.
By Corollary~\ref{cor:contraction}, $(G, \NT)$ is a yes-instance of {\STNT} if and only if so is $(G', \NT)$.
If $G'$ has at most $3k$ vertices, we are done.
Suppose otherwise.
As $V(G') \setminus N_{G'}[\NT] = \{r\}$ and $|\NT| = k$, we have $|N_{G'}(\NT)| \ge 2k$.
We apply the expansion lemma to $G'$ and then obtain $\hat{G}$ and $\hat{V}_{\rm NT}$.
By Lemma~\ref{lem:deletion}, $(G, \NT)$ is a yes-instance of {\STNT} if and only if so is $(\hat{G}, \hat{V}_{\rm NT})$.
We repeatedly apply these reduction rules as long as the reduced graph $\hat{G}$ has at least $3|\hat{V}_{\rm NT}| + 1$ vertices.
Therefore, we have the following theorem.
\begin{theorem}\label{thm:kernel:k}
{\STNT} admits a $3k$-vertex kernel.
\end{theorem}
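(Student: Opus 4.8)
The plan is to iterate the two reduction rules established above and to show that the process terminates quickly with a graph on at most $3k$ vertices. First I would formalize the overall algorithm as a single loop. Given a connected instance $(G, \NT)$, apply the contraction rule to obtain $(G', \NT)$; by Corollary~\ref{cor:contraction} this preserves the answer, and now every vertex other than the added vertex $r$ lies in $N[\NT]$, so $V(G') = \NT \cup N_{G'}(\NT) \cup \{r\}$ and $|V(G')| = k + |N_{G'}(\NT)| + 1$. If $|V(G')| \le 3k$, output $(G', \NT)$ as the kernel. Otherwise $|N_{G'}(\NT)| \ge 2k$, so the bipartite graph $H = (\NT \cup N_{G'}(\NT), E_H)$ satisfies the hypotheses of the expansion lemma with $q = 2$: here $A = \NT$, $B = N_{G'}(\NT)$, we have $|B| \ge 2k = 2|A|$, and every vertex of $B$ has a neighbour in $\NT$ and is therefore non-isolated. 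Applying Lemma~\ref{lem:expansion} and then the deletion rule yields $(\hat{G}, \hat{V}_{\rm NT})$, which by Lemma~\ref{lem:deletion} has the same answer, and I would recurse on this instance.

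Correctness is then immediate from the chain of equivalences supplied by Corollary~\ref{cor:contraction} and Lemma~\ref{lem:deletion}, so the substance of the argument is termination together with the final size guarantee. For termination I would track the parameter itself: in each application of the deletion rule the expansion lemma returns a \emph{non-empty} set $X \subseteq \NT$, and we set $\hat{V}_{\rm NT} = \NT \setminus X$, so $|\hat{V}_{\rm NT}| = k - |X| < k$ strictly decreases. Hence at most $k$ iterations occur before the loop halts, and since each iteration runs in polynomial time (the contraction rule is trivial, and the expansion lemma is polynomial by Lemma~\ref{lem:expansion}), the whole kernelization is polynomial.

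For the size bound I would observe that the loop can only halt immediately after the contraction rule has been applied and the size test has succeeded, that is, when $|N_{G'}(\NT)| < 2k$. In that case $|N_{G'}(\NT)| \le 2k - 1$, and therefore $|V(G')| = k + |N_{G'}(\NT)| + 1 \le 3k$, which is exactly the claimed bound; note that since $k$ only decreases across iterations, the final kernel has at most $3k$ vertices in terms of the original parameter as well.

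The step I expect to require the most care is the bookkeeping across iterations. After the deletion rule, the newly obtained graph $\hat{G}$ need not satisfy the structural invariant that every non-$r$ vertex lies in $N[\hat{V}_{\rm NT}]$, because deleting $Y$ and re-routing edges to $r$ can leave vertices of $N_{G'}(\NT) \setminus Y$ (and even of $X$) stranded away from $\hat{V}_{\rm NT}$. I would handle this by re-applying the contraction rule at the top of each iteration, which reinstates the invariant $V(G') = \NT \cup N_{G'}(\NT) \cup \{r\}$ and makes the size test $|V(G')| \le 3k$ and the threshold $|N_{G'}(\NT)| \ge 2k$ meaningful. Verifying that this re-contraction is answer-preserving is again just Corollary~\ref{cor:contraction}, but this is the point where one must be careful not to conflate the instance fed into the expansion lemma with the raw output of the previous deletion step.
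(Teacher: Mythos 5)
Your proposal is correct and follows essentially the same route as the paper: contract to $N[\NT]$ plus the apex vertex $r$, then repeatedly apply the $2$-expansion reduction (which strictly decreases $|\NT|$ since $X \neq \emptyset$) until $|N_{G'}(\NT)| < 2k$, which gives $|V(G')| \le k + (2k-1) + 1 = 3k$. Your remark that the contraction step must be re-applied at the top of each iteration to restore the invariant $V(G') = \NT \cup N_{G'}(\NT) \cup \{r\}$ is a point of care that the paper leaves implicit but is indeed needed for the size test to be meaningful.
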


\subsection{Linear kernel for vertex cover number}
In this subsection, we show that $\STNT$ admits a linear kernel when parameterized by vertex cover number.
We first apply an analogous transformation used in the previous subsection.
By~\Cref{lem:contraction}, there is an admissible spanning tree for $\NT$ in $G$ if and only if there is an admissible forest for $\NT$ in $G[N[\NT]]$.
We remove all vertices of $V(G) \setminus N[\NT]$ and then add a vertex $r$ that is adjacent to every vertex in $N(\NT)$.
As seen in the previous subsection, $G$ has an admissible spanning tree for $\NT$ if and only if the obtained graph has.
Thus, in the following, $G$ consists of three vertex sets $\NT$, $N(\NT)$, and $\{r\}$, where $r$ is adjacent to every vertex in $N(\NT)$. 
\revised{Furthermore, we delete all the edges within  $G[N(\NT)]$, which is safe by Lemma~\ref{lem:contraction}. Thus, $N(\NT)$ forms an independent set in $G$.}
Let $S$ be a vertex cover of $G$ and $I=V\setminus S$. 
As $G$ is obtained from a subgraph of the original graph by adding a vertex $r$ \revised{and deleting edges within $G[N(\NT)]$}, we have $|S| \le \tau + 1$, where $\tau$ is the vertex cover number of the original graph.
 Suppose that $|\NT\cap I|\ge |S|$. Then we conclude that $G$ has no admissible spanning tree for $\NT$. 

\begin{claim}\label{claim:vc:1}
If $|\NT\cap I|\ge |S|$, there is no admissible spanning tree for $\NT$. 
\end{claim}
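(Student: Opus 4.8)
The plan is to argue by contradiction through a short edge-counting argument that exploits the independence of $I$ together with the acyclicity of a spanning tree. Suppose, toward a contradiction, that $G$ has an admissible spanning tree $T$ for $\NT$, and write $A = \NT \cap I$, so that $|A| \ge |S|$ by the hypothesis of the claim.

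First I would use that $I = V \setminus S$ is an independent set, which is automatic since $S$ is a vertex cover. Because $A \subseteq I$, every edge of $T$ incident to a vertex of $A$ must have its other endpoint in $S$, and no edge of $T$ joins two vertices of $A$. Consequently the number of edges of $T$ incident to $A$ equals $\sum_{v \in A} \deg_T(v)$ (no edge is double-counted), and since each $v \in A$ lies in $\NT$ and is therefore internal in the admissible tree $T$, this degree sum is at least $2|A|$.

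Next I would bound the same quantity from above. Let $F$ be the subgraph of $T$ consisting of all edges incident to $A$; as a subgraph of the tree $T$, it is a forest. By the previous step every such edge runs from $A$ to $S$, so the vertex set of $F$ is contained in $A \cup S$ and has at most $|A| + |S|$ vertices, whence $F$ has at most $|A| + |S| - 1$ edges. Combining the lower and upper bounds yields $2|A| \le |A| + |S| - 1$, that is $|A| \le |S| - 1$, contradicting $|A| = |\NT \cap I| \ge |S|$. Hence no admissible spanning tree for $\NT$ can exist.

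The argument is genuinely short, and the only point requiring care is the double-counting step: I must verify that the independence of $I$ guarantees both that each edge incident to $A$ is counted exactly once in the degree sum and that all such edges land in $S$, so that the degree sum and the forest edge-bound truly refer to the same edge set. Recognizing $A$ as a subset of the independent set $I$ is precisely what makes this identification immediate, and it is the crux of the proof.
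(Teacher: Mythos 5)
Your proof is correct and follows essentially the same route as the paper's: a lower bound of $2|\NT\cap I|$ on the edges of $T$ joining $\NT\cap I$ to $S$ (using independence of $I$ and the internality of non-terminals), against the upper bound $|\NT\cap I|+|S|-1$ coming from acyclicity of the relevant subforest on $(\NT\cap I)\cup S$. Your write-up is just a more carefully spelled-out version of the paper's argument.
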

\begin{proof}
Suppose that $T$ is an admissible spanning tree for $\NT$ in $G$. Then $T$ has at least $2|\NT\cap I|$ edges between $\NT\cap I$ and $S$. Consider the subforest $T'$ of $T$ induced by $(\NT\cap I)\cup S$. We have 
\begin{align*}
    2|\NT\cap I|\le |E(T')|\le |\NT\cap I| + |S| - 1.
\end{align*}
Thus, we have $|\NT\cap I|\le |S|-1$. 
\end{proof}

Suppose next that $2|\NT \cap S| \le |N(\NT)\cap I|$.
We define a bipartite graph $H = ((\NT \cap S) \cup (N(\NT) \cap I), \revised{E_H})$, where $\revised{E_H}$ is the set of edges in $G$, each of which has an endpoint in $\NT \cap S$ and the other in $N(\NT) \cap I$.
As $2|\NT \cap S| \le |N(\NT)\cap I|$ and each vertex in $N(\NT)\cap I$ has a neighbor in $\NT \cap S$, there is a $2$-expansion of $X \subseteq \NT \cap S$ into $Y \subseteq N(\NT) \cap I$ in $H$.
By applying the same reduction rule used in \Cref{lem:deletion}, the obtained instance $(\hat{G}, \hat{V}_{\rm NT})$ satisfies the following claim.
\begin{claim}\label{claim:vc:2}
Suppose that $G$ satisfies $2|\NT \cap S| \le |N_{G}(\NT)\cap I|$. Then $G$ has an admissible spanning tree for $\NT$ if and only if $\hat{G}$ has an admissible spanning tree for $\hat{V}_{\rm NT}$.
\end{claim}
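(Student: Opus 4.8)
The plan is to follow the template of \Cref{lem:deletion}, since the reduction producing $(\hat{G}, \hat{V}_{\rm NT})$ here is structurally identical: we delete $Y$ and attach each $x \in X$ to $r$, with $\hat{V}_{\rm NT} = \NT \setminus X$. The single ingredient that must be re-established in the present setting is the fact that there are \emph{no edges between $Y$ and $\hat{V}_{\rm NT} = \NT \setminus X$} in $G$. This is where the extra structure of this subsection enters, because here the bipartite graph $H$ retains only the edges between $\NT \cap S$ and $N(\NT) \cap I$, rather than all $\NT$-to-$N(\NT)$ edges as in the main kernelization. I would argue it in two parts: since $Y \subseteq N(\NT) \cap I$ and $I$ is an independent set, no vertex of $Y$ has a neighbor in $\NT \cap I$; and since $H$ contains every edge of $G$ between $\NT \cap S$ and $N(\NT) \cap I$, the property $N_H(Y) \subseteq X$ forces the neighbors of $Y$ inside $\NT \cap S$ to lie in $X$. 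Together these show that every neighbor of $Y$ in $\NT$ belongs to $X$, i.e.\ $Y$ has no neighbor in $\hat{V}_{\rm NT}$. I expect this verification to be the main (and essentially only) obstacle; the remainder mirrors \Cref{lem:deletion}.

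For the forward direction I would take an admissible spanning tree $T$ for $\NT$ in $G$. Each $v \in \hat{V}_{\rm NT}$ has at least two incident edges in $T$, and by the key fact above none of these goes to $Y$, so all of them survive in $\hat{G}$. These edges form an admissible forest for $\hat{V}_{\rm NT}$ in $\hat{G}$, which by \Cref{prop:forest_to_SPtree} extends to an admissible spanning tree for $\hat{V}_{\rm NT}$ in $\hat{G}$. This last step uses that $\hat{G}$ is connected, which I would check exactly as in the main kernelization: every vertex of $(N(\NT) \cup X) \setminus Y$ is adjacent to $r$, and every $v \in \hat{V}_{\rm NT}$ has a neighbor in this set, since otherwise $v$ would have a neighbor in $Y$, contradicting the key fact.

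For the converse I would start from an admissible spanning tree $\hat{T}$ for $\hat{V}_{\rm NT}$ in $\hat{G}$, select all edges of $\hat{T}$ incident to $\hat{V}_{\rm NT}$, and add the $2$-expansion $M$ of $X$ into $Y$. The selected $\hat{T}$-edges are genuine edges of $G$, because the only edges of $\hat{G}$ not present in $G$ are those joining $r$ to vertices $x \in X$, and $X$ is disjoint from $\hat{V}_{\rm NT}$. Every $v \in \hat{V}_{\rm NT}$ then has degree at least $2$, and every $x \in X$ has degree at least $2$ from its two incident $M$-edges, so the selected subgraph is admissible for $\NT = \hat{V}_{\rm NT} \cup X$. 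For acyclicity, note that each $y \in Y$ is incident only to its single $M$-edge and hence has degree $1$, so no cycle can pass through $Y$; but every edge of $M$ has an endpoint in $Y$, so any cycle would consist solely of edges of $\hat{T}$, contradicting that $\hat{T}$ is a tree. Thus the selected subgraph is an admissible forest for $\NT$ in $G$, and \Cref{prop:forest_to_SPtree} (using that $G$ is connected) yields the desired admissible spanning tree for $\NT$ in $G$.
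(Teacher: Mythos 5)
Your proof is correct and follows exactly the route the paper intends: the paper's own proof of this claim is just the one-line remark that it is analogous to \Cref{lem:deletion}, and you carry out that analogy faithfully. The single genuinely new ingredient --- that $Y$ has no neighbors in $\NT \setminus X$, which here needs both the independence of $I$ and the property $N_H(Y)\subseteq X$ because $H$ only retains the $(\NT\cap S)$--$(N(\NT)\cap I)$ edges --- is precisely the detail the paper leaves implicit, and you verify it correctly.
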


The proof of this claim is analogous to that in \Cref{lem:deletion}.
Let us note that $S$ remains a vertex cover of $\hat{G}$.

Our kernelization is formalized as follows.
Let $G$ be an input graph.
Let $S$ be a vertex cover of $G$ with $|S| \le 2\cdot\vc(G)$ and let $I = V(G) \setminus S$.
Such a vertex cover $S$ can be computed in polynomial time by a well-known $2$-approximation algorithm for the minimum vertex cover problem.
Then our kernel $(G', \NT')$ is obtained by
exhaustively applying these reduction rules by Claims \ref{claim:vc:1} and \ref{claim:vc:2} to $G$.
Let $S' = (V(G') \cap S) \cup \{r\}$ be a vertex cover of $G'$ and let $I' = V(G') \setminus S'$.
After these reductions, we have 
\begin{align*}
    |V(G')| &= |S'| + |I'| &&\\
    &\le |S'| + |\NT' \cap I'| + |N(\NT')\cap I'| &&\\
    &\le |S'| + |\NT'\cap I'| + 2|\NT\cap S'| - 1 && (\text{by \Cref{claim:vc:2}})\\
    &\le |S'| + |S'|-1  + 2|\NT\cap S'| - 1&& (\text{by \Cref{claim:vc:1}})\\
    &= 4|S'| - 2&& (\text{by $|S'|=|S|+1$})\\
    &= 4|S|+2,
\end{align*}
yielding the following theorem.


\begin{theorem}
{\STNT}  admits a $(8\vc(G)+2)$-vertex kernel.
\end{theorem}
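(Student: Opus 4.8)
The plan is to assemble the two reduction rules from \Cref{claim:vc:1} and \Cref{claim:vc:2} together with an approximate vertex cover, and then bound the size of the fully reduced instance by counting the three vertex classes separately. First I would compute a vertex cover $S$ of the input graph with $|S| \le 2\vc(G)$ using the standard polynomial-time $2$-approximation, and put $I = V(G) \setminus S$. After the normalization via \Cref{lem:contraction} I may assume the working graph has vertex set $\NT \cup N(\NT) \cup \{r\}$, with $N(\NT)$ independent and $r$ adjacent to all of $N(\NT)$; placing $r$ into the cover gives a cover of size $|S|+1$ on the reduced graph.

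Next I would verify that the two rules are \emph{safe} and \emph{terminating}. The rule from \Cref{claim:vc:1} simply reports a no-instance (and outputs a fixed constant-size no-instance) whenever $|\NT \cap I| \ge |S|$, which is correct by that claim. The rule from \Cref{claim:vc:2}, whose correctness is the analogue of \Cref{lem:deletion}, replaces the instance by a strictly smaller equivalent one whenever $2|\NT \cap S| \le |N(\NT)\cap I|$: it deletes the nonempty set $Y \subseteq N(\NT)\cap I$ of a $2$-expansion and adds only edges from $r$ to $X \subseteq \NT \cap S$. Since $r$ lies in the cover, the new edges are covered and the vertex cover survives the reduction, so after each round I can keep $S' = (V(G') \cap S) \cup \{r\}$ as a vertex cover of the current graph $G'$ with $|S'| = |S|+1$. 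Because each application of the second rule deletes $Y \neq \emptyset$, the process halts after at most $|V|$ rounds, so the kernelization runs in polynomial time.

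For the size bound I would analyze the reduced instance $(G', \NT')$ on which neither rule fires. Exhaustion of \Cref{claim:vc:1} gives $|\NT' \cap I'| \le |S'| - 1$; exhaustion of \Cref{claim:vc:2} gives $|N(\NT')\cap I'| \le 2|\NT' \cap S'| - 1 \le 2|S'| - 1$, using the trivial bound $|\NT' \cap S'| \le |S'|$. Since every vertex outside $S'$ is either a surviving non-terminal or a neighbor of one, $I'$ decomposes as $(\NT' \cap I') \cup (N(\NT')\cap I')$, and therefore
\begin{align*}
|V(G')| = |S'| + |I'| &\le |S'| + (|S'| - 1) + (2|S'| - 1)\\
&= 4|S'| - 2 = 4|S| + 2 \le 8\vc(G) + 2,
\end{align*}
where the last step uses $|S| \le 2\vc(G)$. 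The part requiring the most care is this final count: one must confirm that $I'$ splits \emph{exactly} into $\NT' \cap I'$ and $N(\NT')\cap I'$ (which relies on the normalization that forces $r$ into the cover and makes $N(\NT)$ independent), and that both exhaustion inequalities are read off the reduced instance $(G',\NT')$ rather than the original graph. Everything else is routine, and the conceptual content is simply that exhausting the two rules pins $|I'|$ to within a constant multiple of $|S'|$.
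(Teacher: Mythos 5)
Your proposal is correct and follows essentially the same route as the paper: normalize via \Cref{lem:contraction}, exhaustively apply the rules of \Cref{claim:vc:1} and \Cref{claim:vc:2} with the $2$-approximate vertex cover $S$ augmented by $r$, and bound $|V(G')| \le |S'| + (|S'|-1) + (2|S'|-1) = 4|S|+2 \le 8\vc(G)+2$. The only additions are your explicit termination argument and the remark that $I'$ splits into $\NT' \cap I'$ and $N(\NT') \cap I'$, both of which the paper leaves implicit.
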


\subsection{Quadratic kernel for max leaf number}

We show that $\STNT$ admits a polynomial kernel when parameterized by max leaf number.
Thanks to the following lemma, we can suppose that the input graph has  at most  $4\ml(G) - 2$ vertices of degree at least 3.

\begin{lemma}[\cite{KW1991:maxleaf,FLMMRS2009:maxleaf,BJK2013:maxleaf}]\label{lem:max_leaf}
Every graph $G$ with max leaf number $\ml(G)$ is a subdivision of a graph with at most $4\ml(G) - 2$ vertices. In particular, $G$ has at most $4\ml(G) - 2$ vertices of degree at least 3.
\end{lemma}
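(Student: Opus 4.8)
The plan is to reduce the statement to a purely combinatorial bound on a graph with no degree-$2$ vertices, and then to invoke the classical max-leaf machinery. First I would \emph{suppress} every degree-$2$ vertex of $G$: repeatedly delete a vertex of degree exactly $2$ and join its two neighbours by a new edge. Call the resulting (multi)graph $H$. Then $G$ is a subdivision of $H$, each edge of $H$ corresponds to an internally-degree-$2$ path of $G$, and $V(H)$ is exactly the set of vertices of $G$ whose degree differs from $2$; thus $|V(H)| = n_1 + n_{\ge 3}$, where $n_1$ and $n_{\ge 3}$ count the vertices of $G$ of degree $1$ and of degree at least $3$. The degenerate cases where $G$ is a path or a cycle are handled directly (there $n_{\ge 3}=0$ and $\ml(G)=2$). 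It then suffices to prove the two facts $|V(H)| \le 4\ml(H) - 2$ and $\ml(H) \le \ml(G)$; the ``in particular'' clause is immediate since $n_{\ge 3} \le |V(H)|$.

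For $\ml(H)\le \ml(G)$ I would lift a leaf-maximum spanning tree $T_H$ of $H$ to a spanning tree $T_G$ of $G$. Include the full subdivided path of every edge of $T_H$; their internal vertices have degree $2$, so the leaf set is unchanged at this stage. Then, for every non-tree edge of $H$ whose corresponding path has an internal vertex, hang that path as a pendant, omitting its last edge so that its far endpoint becomes a new leaf. Attaching one pendant path creates exactly one new leaf and destroys at most one old leaf, so $T_G$ has at least as many leaves as $T_H$, giving $\ml(G) \ge \ml(H)$.

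It remains to prove the combinatorial bound for a connected graph $H$ with no vertex of degree $2$. Fix a spanning tree $T$ of $H$ with $\ml(H)$ leaves. Every degree-$1$ vertex of $H$ is a forced leaf of $T$, so there are at most $\ml(H)$ of them, and the tree identity $\sum_v (\deg_T(v) - 2) = -2$ shows that $T$ has at most $\ml(H) - 2$ branch vertices. Because $H$ has no degree-$2$ vertex, the only remaining vertices of $V(H)$ are those of degree at least $3$ in $H$ but degree at most $2$ in $T$, which are precisely the vertices incident to a non-tree edge; write $W$ for this set. Summing the three counts gives $|V(H)| \le 2\ml(H) - 2 + |W|$, so the whole lemma reduces to the single bound $|W| \le 2\ml(H)$.

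The hard part will be exactly this last bound on $|W|$; everything else is bookkeeping. Here I would invoke the leaf-maximality (\emph{expansion}) argument of Kleitman--West that underlies the cited references: since $T$ maximises the number of leaves, no local edge-exchange along a non-tree edge can create an additional leaf, and these non-improvability conditions let one charge each vertex of $W$ to a leaf of $T$ with bounded multiplicity, yielding $|W| \le 2\ml(H)$ and hence $|V(H)| \le 4\ml(H) - 2 \le 4\ml(G) - 2$. I expect to rely on this exchange structure rather than reprove it from scratch, as it is precisely the technical content supplied by the references.
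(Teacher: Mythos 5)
First, a point of reference: the paper does not prove this lemma at all --- it is imported verbatim from the cited works, so there is no in-paper argument to compare against and your reconstruction has to stand on its own. Your outer architecture is the standard one and is sound: suppressing degree-2 vertices to obtain $H$, handling paths and cycles separately, and showing $\ml(H)\le\ml(G)$ by lifting a leaf-maximum tree of $H$ and hanging each subdivided non-tree path as a pendant (one new leaf gained, at most one old leaf lost) are all correct steps, and the ``in particular'' clause does follow once the vertex bound on $H$ is in hand.

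The gap is in the accounting that is supposed to reduce everything to the single inequality $|W|\le 2\ml(H)$. You charge the degree-1 vertices of $H$ against the leaf budget (at most $\ml(H)$ of them) and the branch vertices of $T$ against $\ml(H)-2$, but then your set $W$ must absorb \emph{every} remaining vertex, including each leaf of $T$ whose $H$-degree is at least $3$. Those leaves have already been paid for once by the leaf budget, so your target bound on $|W|$ is strictly stronger than what the lemma requires --- and it is in fact false. Take a connected graph of minimum degree $3$ that is tight (or nearly tight) for the Kleitman--West bound $\ml \ge n/4+2$, e.g.\ a cyclic necklace of $m\ge 5$ copies of $K_4$ minus an edge, with $n=4m$ and $\ml=m+2$. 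There are no degree-1 vertices, so $|W| \ge n-(\ml-2) = 3m$, while $2\ml = 2m+4 < 3m$. Hence the inequality you plan to extract from the references is not one they can supply. The repair is to spend the leaf budget only once --- the $L$ leaves of $T$ cover both the degree-1 vertices of $H$ and the $T$-degree-1 members of your $W$ --- so that the residual set is exactly the set of $T$-degree-2 vertices, and the bound needed on that set (at most $2\ml(H)$ of them) is essentially the full Kleitman--West expansion argument, extended to graphs with degree-1 vertices as in the cited kernelization papers, rather than a lightweight corollary of it.
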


Thus, we only have to reduce the number of vertices of degree at most~2. Let $V_i\subseteq V$ be the set of vertices of degree $i$ and $V_{\ge i} = \bigcup_{j\ge i} V_j$.
In the following, we say that a reduction rule is \emph{safe} if the instance $(G', \NT')$ obtained by applying the reduction rule is equivalent to the original instance $(G, \NT)$: $G$ has an admissible spanning tree for $\NT$ if and only if $G'$ has an admissible spanning tree for $\NT'$.

\revised{In the following, we assume that each vertex in $\NT$ has degree at least~2, as otherwise the instance is clearly infeasible. Let $v$ be a vertex of degree~1. By the assumption, we have $v \notin \NT$.
If $v$ has a neighbor that is not in $\NT$, we can safely delete $v$. Otherwise, $v$ has a neighbor $u$, which belongs to $\NT$}. Then we delete $v$ and set $\NT:=\NT\setminus \{u\}$. This is safe because \revised{the degree of $u$ is at least $2$ in any spanning tree of $G$.}

Suppose that $G$ has two adjacent vertices $u,v\in V_2$ of degree $2$. Let $x$ (resp. $y$) be the other neighbor of $u$ (resp. $v$). If both $u$ and $v$ are contained in $\NT$, we contract edge $e = \{u,v\}$ and include the corresponding vertex $uv$ in $\NT$. Since $\{x,u\},\{u,v\},\{v,y\}$ must be contained in any admissible spanning tree for $\NT$, this reduction is also safe.
Suppose that \revised{neither of $u$ and $v$ is contained in $\NT$}.
If $\{u,v\}$ is a bridge in $G$, both $\{x, u\}$ and $\{v, y\}$ are also bridges in $G$, meaning that every admissible spanning tree for $\NT$ in $G$ contains all of these bridges. Thus, we can contract edge $e = \{u, v\}$, which is safe.
Otherwise, $G-e$ remains connected. If $G$ has an admissible spanning tree $T$ for $\NT$ containing $e$, forest $T - e$ is admissible for $\NT$.
Then, by~\Cref{prop:forest_to_SPtree}, there is an admissible spanning tree for $\NT$ in $G - e$. Thus, we can safely remove edge $e$ from $G$. In $G-e$, the degrees of $u$ and $v$ are exactly one. Therefore, we can further remove $u,v$ safely. 


From the above reductions, we can assume that $G$ does not have  
consecutive vertices $u,v$ of degree 2, both of which are in $\NT$ or not in $\NT$. 
Let $p,q,r,s$ be four consecutive vertices of degree 2 in $G$. Let $x$ (resp. $y$) be the other neighbor of $p$ (resp. $s$).
Without loss of generality, \revised{we assume that} $p,r\in \NT$ and $q,s\notin \NT$. 
Let $G'$ be the graph obtained from $G$ by contracting edge $\{p,q\}$ to a vertex $pq$ and $V'_{NT} = \NT \setminus \{p,q\}\cup \{pq\}$. 
Since any admissible spanning tree for $\NT$ in $G$ contains all the edges $\{x, p\}$, $\{p, q\}$, and $\{q, r\}$, we can reduce the instance $(G,\NT)$ to $(G',V'_{NT})$ safely.

Hence, when the above reductions are applied exhaustively, $G$ has neither a vertex of degree $1$ nor consecutive four vertices of degree $2$. 

Now we show that the reduced graph $G'$ has $O(\ml(G)^2)$ vertices.
As our reduction rules either delete a vertex of degree $1$, delete an edge between vertices of degree $2$, or contract an edge between vertices of degree $2$, they do not newly introduce a vertex of degree at least $3$.
This implies that the set of vertices of degree at least $3$ in $G'$ is a subset of $V_{\ge 3}$.
Moreover, as $G'$ has no vertices of degree $1$, we conclude that $G'$ is a subdivision of a graph $H$ with at most $4\ml(G) - 2$ vertices.
Since this subdivision is obtained from $H$ by subdividing each edge with at most three times, $G'$ contains $O(\ml(G)^2)$ vertices.

\begin{theorem}
{\STNT}  admits a quadratic vertex-kernel when parameterized by max leaf number.
\end{theorem}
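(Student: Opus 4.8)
The plan is to combine the structural characterization of bounded-max-leaf graphs from \Cref{lem:max_leaf} with a family of degree-based reduction rules that shrink the long induced paths produced by subdivisions. By \Cref{lem:max_leaf}, $G$ is a subdivision of a graph $H$ on at most $4\ml(G)-2$ vertices, so the branch vertices (those of degree at least $3$) are already bounded by $4\ml(G)-2$. Hence the whole difficulty lies in bounding the number of vertices of degree at most $2$, which sit on the subdivided paths of $H$.

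First I would normalize the instance so that every vertex of $\NT$ has degree at least $2$, rejecting immediately otherwise. Then I would introduce three reduction rules and verify that each is \emph{safe}, in the sense that it preserves the existence of an admissible spanning tree. Rule (i) handles a degree-$1$ vertex $v$: since $v\notin\NT$, it can be deleted, and if its unique neighbor $u$ lies in $\NT$ we additionally drop $u$ from $\NT$, because $u$ is forced to have degree at least $2$ in every spanning tree. Rule (ii) handles two adjacent degree-$2$ vertices $u,v$ that are both in $\NT$ or both outside $\NT$: if both lie in $\NT$ their incident edges are forced, so contracting $\{u,v\}$ (and keeping the merged vertex in $\NT$) is safe; if both lie outside $\NT$ one either contracts a forced bridge or, when $\{u,v\}$ is not a bridge, deletes it and removes the two resulting degree-$1$ vertices, re-extending the admissible forest via \Cref{prop:forest_to_SPtree}. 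Rule (iii) handles four consecutive degree-$2$ vertices: after Rule~(ii) the $\NT$/non-$\NT$ labels along such a run must alternate, so a forced edge can be contracted to shorten the run. The outcome of applying these rules exhaustively is a graph with no degree-$1$ vertex and no four consecutive degree-$2$ vertices.

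The counting step then closes the argument. None of the rules creates a vertex of degree at least $3$, so the reduced graph $G'$ still has at most $4\ml(G)-2$ branch vertices and is itself a subdivision of a graph $H$ on at most $4\ml(G)-2$ vertices. The absence of four consecutive degree-$2$ vertices means each topological edge of $H$ is subdivided at most three times; since $H$ has $O(\ml(G))$ vertices and therefore $O(\ml(G)^2)$ edges, the total number of vertices of $G'$ is $O(\ml(G)) + 3\cdot O(\ml(G)^2) = O(\ml(G)^2)$, giving the claimed quadratic kernel.

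I expect the main obstacle to be the safeness analysis of the degree-$2$ rules rather than the final counting. The delicate case is whether the edge between two degree-$2$ non-terminals is a bridge: in the bridge case its two neighboring path edges are also bridges and are forced into every admissible spanning tree, so contraction cannot destroy a solution; in the non-bridge case I must argue, using \Cref{prop:forest_to_SPtree}, that deleting the edge and the two newly created degree-$1$ vertices preserves admissibility without ever lowering the degree of an $\NT$ vertex below $2$. Maintaining consistent $\NT$ bookkeeping across contractions---moving a vertex into or out of $\NT$ exactly when the structure forces it---is the part where I would be most careful.
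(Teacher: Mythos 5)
Your proposal matches the paper's proof essentially step for step: the same normalization, the same three safe reduction rules (degree-$1$ vertices with the $\NT$ bookkeeping for the neighbor, adjacent degree-$2$ vertices with the bridge/non-bridge case split, and four consecutive degree-$2$ vertices), and the same final count via \Cref{lem:max_leaf} showing the reduced graph is a subdivision of an $O(\ml(G))$-vertex graph with each edge subdivided at most three times. The argument is correct and no further comparison is needed.
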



\subsection{Kernel lower bound for vertex integrity}

In this subsection, we show that {\STNT} does not admit a polynomial kernel when parameterized by vertex integrity unless $\text{NP}\subseteq \text{coNP/poly}$.

\begin{theorem}\label{thm:kernel:vi}
{\STNT} does not admit a polynomial kernel when parameterized by vertex integrity unless $\text{NP}\subseteq \text{coNP/poly}$.
\end{theorem}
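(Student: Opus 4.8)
The plan is to establish the kernel lower bound via an OR-cross-composition (or equivalently, a polynomial parameter transformation from a problem already known to lack polynomial kernels). The goal is to show that a polynomial kernel for \STNT{} parameterized by vertex integrity $\vi(G)$ would compress the OR of many instances of an NP-hard problem into a polynomially bounded object, contradicting $\text{NP}\not\subseteq\text{coNP/poly}$. The key design constraint is that the composed instance must have vertex integrity bounded by a polynomial in the instance size of a \emph{single} input (plus $\log t$ for $t$ instances), while its answer is the logical OR of the $t$ inputs.

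First, I would fix a suitable source problem. A natural candidate is \STNT{} (or \textsc{$s$-$t$ Hamiltonian Path}) restricted to the graph classes on which the paper already establishes NP-hardness, since small such gadgets keep the per-instance size controlled. I would take $t$ instances $(G_1,\NT_1),\dots,(G_t,\NT_t)$, each on at most $n$ vertices, and build one graph $G$ together with a non-terminal set $\NT$ so that $G$ has an admissible spanning tree for $\NT$ if and only if some $G_i$ does. The crucial step is a \emph{selection gadget}: a small separator set $S$ (of size polynomial in $n$, independent of $t$) such that deleting $S$ leaves connected components that are essentially the individual instances, each of size at most roughly $n$. Then $\vi(G)\le |S|+\max_i|V(G_i)|$ is polynomial in $n$ and does not grow with $t$, which is exactly what a cross-composition requires.

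The heart of the construction is making the ``OR'' semantics compatible with the \emph{spanning-tree} (global connectivity) requirement: a spanning tree must connect \emph{all} components, yet we only want one instance to certify a yes-answer. The standard resolution is to attach each instance gadget to the separator $S$ through a ``cheap'' attachment that lets the unused instances be spanned trivially as pendant subtrees (with their own designated non-terminals being satisfiable for free, e.g.\ by routing a path through them), while a yes-answer is forced to arise from at least one instance genuinely admitting an admissible internal structure. Concretely, I would equip each instance with a pair of portal vertices playing the role of $s,t$, wire the portals to $S$, and arrange the degree constraints so that, for the ``active'' instance, the admissible spanning tree restricted to that instance yields exactly an $s$-$t$ Hamiltonian path (hence a yes-certificate), whereas inactive instances impose no obstruction. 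I would then verify both directions of the equivalence and confirm the vertex-integrity bound.

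I expect the main obstacle to be exactly this tension between the conjunctive nature of spanning trees and the disjunctive nature of the composition: I must ensure that inactive instances can always be spanned \emph{without} their designated non-terminals forcing a Hamiltonian-type structure, so that they never spuriously make the whole instance a no-instance, while simultaneously guaranteeing that no ``cheating'' global tree can satisfy all non-terminal degree constraints unless some individual instance truly admits its required structure. Getting the portal/separator wiring and the placement of $\NT$ right so that both the bounded-vertex-integrity property and the clean OR-equivalence hold is the delicate part; once that gadget is correct, invoking the cross-composition framework to conclude the absence of a polynomial kernel under $\text{NP}\not\subseteq\text{coNP/poly}$ is routine.
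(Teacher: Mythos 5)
There is a genuine gap here: your entire argument hinges on a ``selection gadget'' that lets one instance be \emph{active} (forcing a Hamiltonian-type certificate) while all other instances are spanned ``for free,'' and you never construct it --- you only name it as the delicate part. That gadget is the whole proof, and it is working against the grain of the problem. The admissibility condition is conjunctive: \emph{every} vertex of $\NT$ must have degree at least $2$ in the one global spanning tree. If the non-terminals placed inside an instance gadget are what force an $s$-$t$ Hamiltonian path there, then a single no-instance among the $t$ inputs makes the composed instance a no-instance, which is AND semantics, not OR. To get OR semantics you would need a switch that relaxes the constraints of inactive instances, but any such switch must be realized by vertices/edges whose removal cost does not grow with $t$ (to keep $\vi$ bounded by a polynomial in $n+\log t$), and it must not let the active instance cheat. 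No candidate for this is exhibited, and it is far from clear one exists.

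The paper's proof sidesteps this entirely by using an \emph{AND}-cross-composition, which also rules out polynomial kernels under $\text{NP}\not\subseteq\text{coNP/poly}$ by Drucker's theorem \cite{Drucker12}. Concretely: take $q$ instances $(G_i,s_i,t_i)$ of \textsc{$s$-$t$ Hamiltonian Path}, add one new vertex $r$ adjacent to each $s_i$, and set $\NT=\bigcup_i V_i\setminus\{t_i\}$. An admissible spanning tree exists if and only if \emph{every} $G_i$ has an $s_i$-$t_i$ Hamiltonian path, and deleting $S=\{r\}$ leaves components of size $\max_i|V(G_i)|$, so $\vi(G)\le \max_i|V(G_i)|+1$. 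Your instinct to keep the components small behind a tiny separator is exactly right; the fix is to embrace the conjunctive structure and compose with AND rather than fight to impose OR.
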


\begin{proof}
    We construct an AND-cross-composition \cite{BodlaenderJK14,Drucker12} from \textsc{$s$-$t$ Hamiltonian Path}. For $q$ instances $(G_i=(V_i,E_i), s_i, t_i)$ of \textsc{$s$-$t$ Hamiltonian Path} and a vertex $r$, we connect $r$ to $s_i$ by an edge $\{r,s_i\}$ for $1\le i\le q$. Let $G$ be the constructed graph and let $\NT = \bigcup_{1 \le i \le q} V_i\setminus \{t_i\}$. Then it is easy to see that every $G_i$ has an $s_i$-$t_i$ Hamiltonian path if and only if $G$ has an admissible spanning tree for $\NT$. Since the vertex integrity of $G$ is at most $\max_{1 \le i \le q} |V(G_i)| + 1$, the theorem holds.
\end{proof}

We remark that {\STNT} is significantly different from \textsc{$s$-$t$ Hamiltonian Path} with respect to kernelization complexity. 
\begin{remark}
\textsc{$s$-$t$ Hamiltonian Path} admits a polynomial kernel when parameterized by vertex integrity while it does not admit a polynomial kernel when parameterized by treedepth unless $\text{NP}\subseteq \text{coNP/poly}$.
\end{remark}

\begin{proof}
    We first show that \textsc{$s$-$t$ Hamiltonian Path} admits a polynomial kernel when parameterized by vertex integrity. Let $\vi(G)$ be the vertex integrity of $G$.
    \revised{By a polynomial-time $O(\log \vi(G))$ approximation algorithm~\cite{GimaHKM0O24}, we can obtain a vertex set $S$ such that $|S| + \max_{H\in \texttt{cc}(G-S)}|V(H)|= O(\vi(G)\log \vi(G))$ in polynomial time}. If the number of connected components of $G-S$ is at least \revised{$|S|+2$}, we immediately conclude that the instance is infeasible because any Hamiltonian path has to go through at least one vertex in $S$ from a connected component to another one. Otherwise, the number of connected components is at most \revised{$|S|+1$}. Since \revised{$|S| + \max_{H\in \texttt{cc}(G-S)}|V(H)|=O(\vi(G)\log \vi(G))$, the number of vertices in $G$ is at most $O(\vi(G)^2\log^2\vi(G))$}.

    To complement this positive result, we then show that the problem does not admit a polynomial kernel when parameterized by treedepth by showing an AND-cross-composition from \textsc{$s$-$t$ Hamiltonian Path}. For $q$ instances $(G_i=(V_i,E_i), s_i, t_i)$ of \textsc{$s$-$t$ Hamiltonian Path}, we connect  in series $G_i$ and $G_{i+1}$ by identifying $t_{i}$ as $s_{i+1}$ for $1\le i\le q-1$. Let $G$ be the constructed graph. By taking the center vertex (i.e., $s_{\lceil q/2\rceil + 1})$ of $G$ as a separator recursively, we can observe that the treedepth of $G$ is at most $\lceil \log_2 q\rceil + \max_{1 \le i \le q} |V(G_i)|$. It is to see that $G$ has an $s_1$-$t_q$ Hamiltonian path if and only if every $G_i$ has an $s_i$-$t_i$ Hamiltonian path. Thus,  the theorem holds.
\end{proof}

\section{Fixed-Parameter Algorithms} \label{sec:FPT}
\subsection{Parameterization by $k$}
By Theorem~\ref{thm:kernel:k}, {\STNT} is fixed-parameter tractable parameterized by $k = |\NT|$.
A trivial brute force algorithm on a $3k$-vertex kernel yields a running time bound $2^{O(k^2)} + n^{O(1)}$, where $n$ is the number of vertices in the input graph $G = (V, E)$.\footnote{A slightly non-trivial dynamic programming algorithm yields a better running time bound $O^*(8^k)$.}
However, this cannot be applied to the weighted case, namely {\MSTNT}.
In this subsection, we give an $O^*(2^k)$-time algorithm for {\MSTNT}, provided that the weight function $w$ is (positive) integral with $\max_{v\in V}w(v) = n^{O(1)}$.
The algorithm runs in (pseudo-)polynomial space.
Our algorithm is based on the Inclusion-Exclusion principle, which is quite useful to design exact exponential algorithms \cite{FominBook2010,CyganBook2015}, and counts the number of admissible spanning trees for $\NT$ in $G$.

\begin{theorem}[Inclusion-Exclusion principle]\label{thm:IE}
Let $U$ be a finite set and let $A_1,\ldots, A_t\subseteq U$.
Then, the following holds:
\begin{align*}
    |\bigcap_{i\in \{1,\ldots,t\}}A_i|=\sum_{X\subseteq \{1,\ldots, t\}}(-1)^{|X|}|\bigcap_{i\in X}\overline{A_i}|,
\end{align*}
where $\overline{A_i} = U \setminus A_i$ and $\bigcap_{i \in \emptyset} \overline{A_i} = U$.
\end{theorem}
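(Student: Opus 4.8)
The plan is to prove this identity by a double-counting (element-by-element) argument rather than by induction, since tracking the coefficient of each individual element makes the combinatorial cancellation transparent. The idea is to interchange the order of summation: instead of summing over subsets $X$ and then over the elements counted by each term $|\bigcap_{i \in X} \overline{A_i}|$, I would fix an element $u \in U$ and compute the total coefficient with which it is counted on the right-hand side.

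First, for each $u \in U$ I would define the index set $S(u) = \{i \in \{1,\ldots,t\} : u \in \overline{A_i}\}$, that is, the set of indices $i$ for which $u$ does not belong to $A_i$. The key observation is that $u$ contributes to the term $|\bigcap_{i \in X} \overline{A_i}|$ exactly when $u \in \overline{A_i}$ for every $i \in X$, which holds precisely when $X \subseteq S(u)$; this includes $X = \emptyset$ under the stated convention $\bigcap_{i \in \emptyset}\overline{A_i} = U$. Hence the total coefficient of $u$ on the right-hand side is $\sum_{X \subseteq S(u)} (-1)^{|X|}$.

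Next I would evaluate this coefficient. Grouping the subsets $X \subseteq S(u)$ by their cardinality gives $\sum_{j=0}^{|S(u)|} \binom{|S(u)|}{j} (-1)^j = (1-1)^{|S(u)|}$ by the binomial theorem. This equals $0$ when $S(u) \neq \emptyset$ and equals $1$ when $S(u) = \emptyset$. Now $S(u) = \emptyset$ means exactly that $u \in A_i$ for all $i$, i.e.\ $u \in \bigcap_{i} A_i$. Therefore the right-hand side counts each element of $\bigcap_i A_i$ with coefficient $1$ and every other element with coefficient $0$, so it equals $|\bigcap_i A_i|$, which is the left-hand side.

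I expect no serious obstacle here, as this is a classical identity; the only points requiring care are bookkeeping ones, namely making sure the empty-set term $X = \emptyset$ is correctly accounted for (it is the source of the $+|U|$ contribution and is precisely what produces coefficient $1$ in the $S(u) = \emptyset$ case) and confirming that the binomial sum vanishes exactly when $S(u)$ is nonempty. An alternative would be a straightforward induction on $t$, splitting each subset $X$ according to whether the index $t$ is included, but the double-counting argument is cleaner and avoids the case analysis.
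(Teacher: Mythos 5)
Your proof is correct: the element-wise coefficient argument, reducing everything to $\sum_{X\subseteq S(u)}(-1)^{|X|}=(1-1)^{|S(u)|}$ and noting this is $1$ exactly when $S(u)=\emptyset$, i.e.\ $u\in\bigcap_i A_i$, is the standard and complete way to establish this identity, and you handle the $X=\emptyset$ convention correctly. The paper states this classical principle without proof (citing it as a known tool), so there is no in-paper argument to compare against; your write-up would serve as a valid self-contained justification.
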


Let $G = (V, E)$ be a graph with $w\colon E \to \mathbb N$ and let $W = \max_{v \in V} w(v)$.
For $0 \le q \le (n-1)W$, let $\mathcal T^q_G$ be the set of all spanning trees of $G$ with weight exactly $q$.
By a weighted counterpart of Kirchhoff's matrix tree theorem~\cite{Kirchhoff1847}, we can count the number of spanning trees in $\mathcal T^q_G$ efficiently.
\begin{theorem}[\cite{BroderM97:cnt-MWST}]\label{thm:weighted-matrix-tree-thm}
    There is an algorithm that, given an edge-weighted graph $G = (V, E)$ with $w\colon E \to \{1, 2, \ldots, W\}$ for some $W \in \mathbb N$ and an integer $q$, computes the number of spanning trees $T$ of $G$ with $w(T) = q$.
    Moreover, this algorithm runs in time $(n + W)^{O(1)}$ with space $(n + W)^{O(1)}$, where $n = |V|$.
\end{theorem}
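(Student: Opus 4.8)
The plan is to reduce this counting problem to extracting a single coefficient of a univariate polynomial obtained from a weighted generalization of Kirchhoff's matrix-tree theorem. First I would introduce a formal variable $x$ and define the weighted Laplacian $L(x)$ of $G$, whose off-diagonal entry $L_{uv}(x)$ equals $-\sum_{e=\{u,v\}} x^{w(e)}$ and whose diagonal entry $L_{uu}(x)$ equals the weighted degree $\sum_{v \neq u}\sum_{e=\{u,v\}} x^{w(e)}$, so that every entry is a polynomial in $x$ of degree at most $W$. The weighted matrix-tree theorem then asserts that any first cofactor of $L(x)$ (the determinant of the submatrix obtained by deleting one row and the corresponding column) equals $P(x) := \sum_{T} x^{w(T)}$, where the sum ranges over all spanning trees $T$ of $G$ and $w(T)=\sum_{e\in E(T)} w(e)$. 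Consequently the quantity we seek, $|\mathcal T^q_G|$, is exactly the coefficient of $x^q$ in $P(x)$, a polynomial of degree at most $(n-1)W$.

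It then remains to compute this coefficient in time and space polynomial in $n+W$. Rather than manipulating polynomial-valued matrices symbolically (where Gaussian elimination is unavailable over the ring $\mathbb{Z}[x]$), I would use evaluation and interpolation. Working over a prime field $\mathbb{F}_p$ with $p > (n-1)W$, I would evaluate the cofactor of $L(x)$ at $(n-1)W+1$ distinct points of $\mathbb{F}_p$; each evaluation instantiates a numerical matrix over $\mathbb{F}_p$ whose determinant is computed by Gaussian elimination in $O(n^3)$ field operations. Standard univariate interpolation over $\mathbb{F}_p$ then recovers every coefficient of $P \bmod p$, and in particular the coefficient of $x^q$.

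To turn residues modulo $p$ into the exact integer $|\mathcal T^q_G|$ while keeping the bit-complexity polynomial, I would invoke an a priori bound on this count. Since $|\mathcal T^q_G|$ is at most the total number of spanning trees of $G$, which is bounded by $n^{n-2}$ (via the matrix-tree theorem applied to $K_n$), it has $O(n\log n)$ bits. I would therefore run the evaluation-interpolation procedure modulo a family of distinct primes whose product exceeds $n^{n-2}$; a polynomial number of primes of bit-length $O(\log(n+W))$ suffices, and the exact value is reconstructed by the Chinese Remainder Theorem. Processing one prime and one evaluation point at a time keeps the space usage within $(n+W)^{O(1)}$, yielding the claimed bounds.

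The step I expect to be the main obstacle is the computational one rather than the algebraic identity: ensuring that the running time is genuinely polynomial in $n+W$ and not exponential in the bit-length of the coefficients. The degree $(n-1)W$ forces that many evaluation points, so the polynomial dependence on $W$ is unavoidable but acceptable; the real care lies in the modular arithmetic — choosing enough primes via the $n^{n-2}$ bound, confirming that each prime admits the required number of distinct evaluation points, and verifying that the per-prime work together with the CRT reconstruction all stay polynomial. The weighted matrix-tree theorem itself follows from the classical Cauchy--Binet argument verbatim, treating the monomials $x^{w(e)}$ as formal edge labels, so I would regard it as the easier half.
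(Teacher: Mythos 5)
Your argument is correct: the generating-polynomial Laplacian with entries built from $x^{w(e)}$, coefficient extraction of $\sum_T x^{w(T)}$ via evaluation/interpolation over $\mathbb{F}_p$ with $p>(n-1)W$, and Chinese remaindering against the Cayley bound $n^{n-2}$ together give a sound proof within the claimed $(n+W)^{O(1)}$ time and space. The paper itself imports this theorem from the cited reference without proof, and your argument is essentially the standard one underlying that reference, so there is nothing to reconcile.
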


For $v \in \NT$, let $\mathcal A^q_v\subseteq \mathcal T^q_G$ be the set of spanning trees of $G$ with weight $q$ that have $v$ as an internal node.
Clearly, the number of admissible spanning trees for $\NT$ with weight $q$ is $|\bigcap_{v\in \NT}\mathcal A^q_v|$.
Thus, we can solve {\MSTNT} by checking if $|\bigcap_{v\in \NT}\mathcal A^q_v| > 0$ for each $q$.
Due to Theorem \ref{thm:IE}, it suffices to compute $|\bigcap_{v\in X}\overline{\mathcal A^q_v}|$ for each $X \subseteq \NT$, where $\overline{\mathcal A^q_v} = \mathcal T^q_G \setminus \mathcal A^q_v$.
Here, $|\bigcap_{v\in X}\overline{\mathcal A^q_v}|$ is equal to the number of spanning trees $T \in \mathcal T^q_G$ such that every vertex in $X$ is a leaf of $T$.

\begin{lemma}\label{lem:counting}
Given $X \subseteq V$, we can compute $|\bigcap_{v \in X} \overline{\mathcal A^q_v}|$ in time $(n + W)^{O(1)}$.
\end{lemma}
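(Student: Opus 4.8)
The plan is to read the quantity $|\bigcap_{v \in X}\overline{\mathcal A^q_v}|$ as what the excerpt already identifies it to be, namely the number of spanning trees $T \in \mathcal T^q_G$ in which every vertex of $X$ is a leaf, and then to count these via a structural decomposition that separates the leaf vertices $X$ from the rest of the tree. In our setting $X \subseteq \NT$ with $|\NT| = k \le n-2$, so $V \setminus X$ is nonempty; I would first observe that no edge of such a tree $T$ can join two vertices of $X$, since two adjacent degree-$1$ vertices would already constitute the whole of $T$, forcing $V \subseteq X$. Consequently every edge of $T$ incident to $X$ runs between a vertex of $X$ and a vertex of $V \setminus X$, and deleting the vertices of $X$ from $T$ leaves a spanning tree $T'$ of $G[V \setminus X]$. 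Conversely, any spanning tree $T'$ of $G[V \setminus X]$ together with, for each $v \in X$, a single chosen edge $e_v$ from $v$ to $V \setminus X$ reassembles into a spanning tree of $G$ in which each $v \in X$ is a leaf: attaching a degree-$1$ vertex preserves acyclicity and connectivity, and the edge count is exactly $|V|-1$. This yields a weight-preserving bijection, the weight of $T$ being $w(E(T')) + \sum_{v \in X} w(e_v)$.

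Next I would encode weights by polynomials in a formal variable $z$ of degree at most $(n-1)W$. Let $S(z) = \sum_{p} |\mathcal T^p_{G[V \setminus X]}|\, z^p$ be the weight-enumerator of spanning trees of $G[V \setminus X]$, and for each $v \in X$ let $P_v(z) = \sum_{u \in N(v) \setminus X} z^{w(\{v,u\})}$ enumerate the single edge selected for $v$. By the bijection, the desired count is precisely the coefficient of $z^q$ in the product $S(z) \cdot \prod_{v \in X} P_v(z)$. Note that if $G[V \setminus X]$ is disconnected, or if some $v \in X$ has no neighbor outside $X$, the corresponding factor is identically zero, which correctly forces the count to be zero (indeed a leaf of degree $1$ cannot bridge two components of $G[V \setminus X]$).

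Finally I would compute each factor and multiply. The coefficients of $S(z)$ are obtained by invoking Theorem~\ref{thm:weighted-matrix-tree-thm} on $G[V \setminus X]$ for each target weight $p \in \{0, 1, \ldots, (n-1)W\}$, in total time $(n+W)^{O(1)}$; each $P_v(z)$ is read off directly from the edges incident to $v$. Multiplying the at most $n+1$ polynomials iteratively, each of degree $O(nW)$ and truncated at degree $(n-1)W$, costs $(n+W)^{O(1)}$, after which reading the coefficient of $z^q$ completes the computation within the claimed bound. The one point demanding care—and the main obstacle—is verifying that the decomposition is a genuine weight-preserving bijection, in particular ruling out edges inside $X$ and checking that every reassembled graph is really a spanning tree; once this is pinned down, the generating-function bookkeeping and the appeal to the weighted matrix-tree theorem are routine.
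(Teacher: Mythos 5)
Your proposal is correct and follows essentially the same route as the paper: both decompose a spanning tree with all of $X$ as leaves into a spanning tree of $G[V\setminus X]$ plus one connecting edge per vertex of $X$, invoke Theorem~\ref{thm:weighted-matrix-tree-thm} on $G[V\setminus X]$ for each weight value, and combine the counts by a convolution over weights. Your generating-function product $S(z)\cdot\prod_{v\in X}P_v(z)$ is just a repackaging of the paper's explicit sum $\sum_{q'}|\mathcal T^{q'}_{G[V\setminus X]}|\cdot|\mathcal M^{q-q'}(X,V\setminus X)|$ together with its dynamic program for $|\mathcal M^{j}|$, so the two arguments coincide in substance.
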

\begin{proof}
Assume that $|V| \ge 3$.
Then, every leaf $w \in X$ of a spanning tree $T$ in $\bigcap_{v \in X} \overline{\mathcal A^q_v}$ has exactly one neighbor in $V \setminus X$.
Moreover, the subtree of $T$ obtained by removing all vertices in $X$ is a spanning tree of $G[V \setminus X]$.
Thus, the following equality holds:
\begin{align*}
    \left|\bigcap_{v \in X} \overline{\mathcal A^q_v}\right| = 
    \sum_{0 \le q' \le q} |\mathcal T^{q'}_{G[V\setminus X]}| \cdot |\mathcal M^{q-q'}(X, V\setminus X)|.
\end{align*}
\revised{In the above equality, for $Y \subseteq X$ and $0 \le j \le q$, we denote by $\mathcal M^{j}(Y, V\setminus X)$ the collection of edge subsets $M \subseteq E \cap (Y \times (V\setminus X))$ with $w(M) = j$ such that each vertex in $Y$ is incident to exactly one edge in $M$.}
By Theorem~\ref{thm:weighted-matrix-tree-thm}, we can compute $|\mathcal T^{q'}_{G[V \setminus X]}|$ in $(n + W)^{O(1)}$ time.
Thus, it suffices to compute $|\mathcal M^{q-q'}(X, V\setminus X)|$ in time $(n + W)^{O(1)}$ as well.
This can be done by dynamic programming described as follows.
Let $X = \{x_1, x_2, \ldots, x_p\}$.
For $0 \le i \le p$ and $0 \le j \le q - q'$, we define $m^j(i) = |\mathcal M^j(\{x_1, \ldots, x_{i}\}, V \setminus X)|$.
Clearly, $m^0(0) = 1$, $m^j(0) = 0$ for $j > 0$, and $m^j(p) = |\mathcal M^j(X, V \setminus X)|$ for $j \ge 0$.
For $i \ge 1$, it is easy to verify that
\begin{align*}
    m^j(i) = \sum_{e \in E \cap (\{x_i\} \times (V \setminus X))} m^{j - w(e)}(i - 1),
\end{align*}
where we define $m^{j'}(i-1) = 0$ for negative integer $j'$.
We can evaluate $m^j(i)$ in time $(n + W)^{O(1)}$ by dynamic programming, and hence the lemma follows.
\end{proof}

\begin{theorem}\label{thm:FPT}
{\MSTNT} is solvable in time $O^*(2^k)$ and polynomial space when \revised{the edge weight function} $w$ is integral with $\displaystyle\max_{v \in V} w(v) = n^{O(1)}$.
\end{theorem}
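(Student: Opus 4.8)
The plan is to combine the Inclusion-Exclusion principle (\Cref{thm:IE}) with the counting lemma (\Cref{lem:counting}) to count admissible spanning trees of each possible weight. Setting $U = \mathcal T^q_G$ and $A_v = \mathcal A^q_v$ for $v \in \NT$, \Cref{thm:IE} gives
\begin{align*}
    \left|\bigcap_{v \in \NT} \mathcal A^q_v\right| = \sum_{X \subseteq \NT} (-1)^{|X|} \left|\bigcap_{v \in X} \overline{\mathcal A^q_v}\right|.
\end{align*}
The left-hand side is exactly the number of admissible spanning trees for $\NT$ of weight $q$, so the instance is a yes-instance with optimal weight $q$ precisely when this quantity is positive. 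First I would iterate over all candidate weights $q$ in the range $0 \le q \le (n-1)W$, where $W = \max_{v \in V} w(v) = n^{O(1)}$, and for each $q$ evaluate the right-hand sum.

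For a fixed $q$, the sum ranges over all $2^k$ subsets $X$ of $\NT$, and by \Cref{lem:counting} each term $|\bigcap_{v \in X} \overline{\mathcal A^q_v}|$ is computable in time $(n + W)^{O(1)}$. Hence evaluating the full sum for one value of $q$ takes $O^*(2^k)$ time, and ranging over all $q$ multiplies only by a polynomial factor, giving total running time $O^*(2^k)$. To extract an optimal solution rather than merely decide feasibility, I would take the smallest $q$ for which $|\bigcap_{v \in \NT} \mathcal A^q_v| > 0$ as the optimum weight; an actual tree can then be recovered by standard self-reduction, fixing edges one at a time and recomputing the count, which costs only a polynomial overhead. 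The space bound is inherited from \Cref{thm:weighted-matrix-tree-thm} and the dynamic program of \Cref{lem:counting}, both of which use $(n+W)^{O(1)}$ space, so the whole algorithm runs in polynomial space under the assumption $W = n^{O(1)}$.

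Most of the technical work has already been discharged by the preceding lemmas, so the remaining argument is essentially bookkeeping. The one point that warrants care is verifying that the right-hand side of the Inclusion-Exclusion identity is genuinely computable within the claimed budget, namely that the counting subroutine in \Cref{lem:counting} correctly handles the empty term $\bigcap_{i \in \emptyset} \overline{\mathcal A^q_i} = \mathcal T^q_G$ and that the weighted matrix-tree theorem (\Cref{thm:weighted-matrix-tree-thm}) is invoked on the correct induced subgraphs. The main conceptual obstacle, if any, is ensuring the polynomiality of $W$ is used exactly where needed: the summation over $q'$ in \Cref{lem:counting} and over $q$ at the top level both contribute factors polynomial in $W$, and the correctness of the pseudo-polynomial bound hinges on $W = n^{O(1)}$ so that all these factors remain polynomial in the input size.
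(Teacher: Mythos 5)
Your proposal matches the paper's argument: both apply the Inclusion--Exclusion principle (\Cref{thm:IE}) to express the number of admissible spanning trees of weight $q$ as a sum over the $2^k$ subsets of $\NT$, evaluate each term via \Cref{lem:counting} in $(n+W)^{O(1)}$ time, and sweep over all $q \le (n-1)W$, which stays polynomial because $W = n^{O(1)}$. The additional remark on recovering a witness tree by self-reduction is a harmless elaboration beyond what the paper states.
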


\subsection{Parameterization by $\ell$}
In this subsection, we give a fixed-parameter algorithm for {\MSTNT} with respect to the number of edges $\ell$ in $G[\NT]$.
Note that $\ell$ is a ``smaller'' parameter than $k$ in the sense that $\ell \le \binom{k}{2}$, while $k$ can be arbitrary large even if $\ell = 0$.
Note also that the algorithm described in this subsection works for {\MSTNT} with an arbitrary edge weight function, whereas the algorithm in the previous subsection works only for bounded integral weight functions.

We first consider the case where $\ell = 0$. 
To this end, we construct a partition of $E(G)$ as follows.
Let $\NT = \{v_1, v_2, \ldots, v_k\}$.
For each $v \in \NT$, we let $E_v$ be the set of edges that are incident to $v$ in $G$ and let $R = E(G) \setminus (E_{v_1}, E_{v_2},\ldots, E_{v_k})$.
Since $G[\NT]$ is edge-less, $\{E_{v_1}, E_{v_2}, \ldots, E_{v_k}, R\}$ is a partition of $E(G)$.
Clearly, a spanning tree $T$ of $G$ is admissible for $\NT$ if and only if $T$ contains at least two edges from $E_{v}$ for every $v \in \NT$.
This condition can be represented by the intersection of the following two matroids $\mathcal M_1$ and $\mathcal M_2$. 
\revised{$\mathcal M_1 = (E, \mathcal B_g)$} is just a graphic matroid of $G$ and $\mathcal M_2 = (E, \mathcal B_{n-1})$ is defined as follows: $\mathcal B_{n-1}$ consists of all edge \revised{subsets} $F \subseteq E(G)$ with $|F| = n - 1$ such that for $v \in \NT$, it holds that $|F \cap E_v| \ge \ell_v \coloneqq 2$.
Thus, a set of edges $F$ forms an admissible spanning tree for $\NT$ if and only if $F \in \mathcal B_g \cap \mathcal B_{n-1}$. 
By Proposition~\ref{prop:LUBmatroid}, $\mathcal M_2$ is a matroid.
By Theorem~\ref{thm;matroid_intersect} and Lemma~\ref{lem:ind-oracle}, we can find a minimum weight admissible spanning tree of $G$ for $\NT$ in polynomial time.

This polynomial-time algorithm can be extended to an $O^*(2^\ell)$-time algorithm for general $\ell \ge 0$.
For each $F \subseteq E(G[\NT])$, we compute a minimum weight admissible spanning tree $T$ for $\NT$ such that $E(T) \cap E(G[\NT]) = F$. 
To this end, we first check whether $F$ has no cycle.
If $F$ has a cycle, there is no such an admissible spanning tree for $\NT$.
Otherwise, we modify \revised{the matroids $\mathcal M_1$ and $\mathcal M_2$ as follows.
Let $G'$ be the multigraph obtained from $G$ by deleting edges in $E(G[\NT]) \setminus F$ and then contracting edges in $F$.
Note that the edge set of $G'$ corresponds to $E \setminus E(G[\NT])$ and hence $\mathcal E = \{E_{v_1} \setminus E(G[\NT]), \dots, E_{v_k} \setminus E(G[\NT]), R \setminus E(G[\NT])\}$ is a partition of $E(G')$.
Moreover, it is easy to see that every spanning tree of $G$ can be modified to a spanning tree of $G'$ by contracting all edges in $F$ and vice-versa.
Now, we define $\mathcal M'_1 = (E(G'), \mathcal B_{g'})$ as the graphic matroid of $G'$.
For $v \in \NT$, we let $\ell'_v = \max(0, 2 - d_{F}(v))$, where $d_{F}(v)$ is the number of edges incident to $v$ in $F$.
We denote by $\mathcal M'_2$ a pair $(E(G'), \mathcal B'_{n-|F|-1})$, where $\mathcal B'_{n-|F|-1}$ is the family of edge sets $F' \subseteq E(G')$ with $|F'| = n - |F| - 1$ such that for $v \in \NT$, it holds that $|F' \cap (E_v \setminus E(G[\NT]))| \ge \ell'_v$.
This pair is indeed a matroid as $\mathcal E$ is a partition of $E(G')$.}
Then, every spanning tree $T$ satisfying $E(T) \cap E(G[\NT]) = F$ is admissible for $\NT$ if and only if the edge set of $T$ belongs to the intersection of the following modified matroids $\mathcal M'_{1}$ and $\mathcal M'_2$.

$\mathcal M'_2$ is a matroid consisting a pair $(E(G) \setminus E(G[\NT]), \mathcal B'_{n-|F| -1})$, where $\mathcal B'_{n-|F|-1}$ is 
Thus, every common base in $\mathcal B'_g \cap \mathcal B'_{n-|F| - 1}$ corresponds to an edge set $F' \subseteq E(G) \setminus E(G[\NT])$ such that $F' \cup F$ forms an admissible spanning tree for $\NT$ in $G$.
Again, by~Theorem~\ref{thm;matroid_intersect} and Lemma~\ref{lem:ind-oracle}, we can compute a minimum weight admissible spanning tree $T$ for $\NT$ in $G$ with $E(T) \cap E(G[\NT]) = F$ in polynomial time, which yields the following theorem. 


\begin{theorem}\label{thm:FPT-ell}
    {\MSTNT} can be solved in time $O^*(2^{\ell})$ and polynomial space.
\end{theorem}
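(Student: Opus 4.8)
The plan is to reduce {\MSTNT} to a sequence of $2^\ell$ weighted matroid intersection problems, one for each way the solution can interact with the edges inside $G[\NT]$. The key observation is that an admissible spanning tree $T$ is determined, as far as the constraints on $\NT$ are concerned, by the set $F = E(T) \cap E(G[\NT])$ together with the edges of $T$ leaving each non-terminal. Since there are only $\ell$ edges in $G[\NT]$, there are only $2^\ell$ candidate sets $F$ to enumerate, and this is the sole source of the exponential factor.

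For a fixed guess $F \subseteq E(G[\NT])$, I would first reject $F$ immediately if it contains a cycle, since no spanning tree can contain a cyclic edge set. Otherwise the idea is to ``commit'' to $F$ by contracting its edges and forbidding the remaining edges of $G[\NT]$; this produces a multigraph $G'$ whose edge set is exactly $E \setminus E(G[\NT])$. Crucially, contraction gives a bijection between spanning trees $T$ of $G$ with $E(T)\cap E(G[\NT]) = F$ and spanning trees of $G'$ (Proposition~\ref{prop:forest_to_SPtree} guarantees $F$ extends to a spanning tree in the first place), and weights are preserved because $w(F)$ is a fixed additive constant. The admissibility condition ``every $v \in \NT$ has degree at least $2$ in $T$'' translates, after accounting for the $d_F(v)$ edges already contributed by $F$, into the lower-bound requirement that the tree of $G'$ use at least $\ell'_v = \max(0, 2 - d_F(v))$ edges from the block $E_v \setminus E(G[\NT])$ of the natural partition $\mathcal E$ of $E(G')$.

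The core of each subproblem is then to find a minimum-weight common base of the graphic matroid $\mathcal M'_1$ of $G'$ and the lower-bound partition matroid $\mathcal M'_2 = (E(G'), \mathcal B'_{n-|F|-1})$. That $\mathcal M'_2$ is genuinely a matroid follows from Proposition~\ref{prop:LUBmatroid} applied to the partition $\mathcal E$, and both matroids admit polynomial-time independence oracles: the graphic one by Proposition~\ref{prop:forest_to_SPtree} (a set is independent iff it is a forest), and the partition one by Lemma~\ref{lem:ind-oracle}. Theorem~\ref{thm;matroid_intersect} then produces, in polynomial time, a minimum-weight set in $\mathcal B_{g'} \cap \mathcal B'_{n-|F|-1}$, which together with $F$ yields a minimum-weight admissible spanning tree subject to $E(T) \cap E(G[\NT]) = F$. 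Taking the minimum over all $2^\ell$ guesses of $F$ gives the global optimum, and the total running time is $O^*(2^\ell)$ with polynomial space since each iteration runs in polynomial time and the iterations can be performed one at a time.

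The main obstacle I anticipate is getting the correspondence between trees of $G$ and $G'$ exactly right, and ensuring the degree-lower-bound translation is faithful under contraction. In particular, contracting $F$ can create parallel edges or merge endpoints, so one must verify that the partition $\mathcal E$ of $E(G')$ is well-defined (each surviving edge incident to $v \in \NT$ still lies in a unique block $E_v \setminus E(G[\NT])$) and that the matroid intersection is over exactly $n - |F| - 1$ edges, matching the number of tree edges of $G'$ on its $n - |F|$ vertices. The only correctness subtlety is confirming that the lower bounds $\ell'_v$ applied in $G'$ reproduce ``degree at least $2$ in $T$'' once the $|F|$ contracted edges are reinstated; this is a routine but necessary degree-counting check, the rest following directly from the cited matroid machinery.
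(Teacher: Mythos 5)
Your proposal matches the paper's proof essentially step for step: the same enumeration over the $2^{\ell}$ acyclic subsets $F$ of $E(G[\NT])$, the same contraction/deletion to form the multigraph $G'$, the same adjusted lower bounds $\ell'_v = \max(0, 2 - d_F(v))$ on the partition $\{E_{v} \setminus E(G[\NT])\}_{v \in \NT} \cup \{R \setminus E(G[\NT])\}$, and the same invocation of weighted matroid intersection via Theorem~\ref{thm;matroid_intersect}, Proposition~\ref{prop:LUBmatroid}, and Lemma~\ref{lem:ind-oracle}. The subtleties you flag (well-definedness of the partition after contraction and the degree-counting check) are exactly the points the paper handles, so the argument is correct and takes the same route.
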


As a straightforward consequence of the above theorem, {\MSTNT} is fixed-parameter tractable parameterized by the number $k$ of non-terminals.

\begin{corollary}\label{cor:FPT-k:weighted}
    {\MSTNT} can be solved in time $O^*(2^{\binom{k}{2}})$ and polynomial space.
\end{corollary}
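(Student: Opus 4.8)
The plan is to obtain this bound as an immediate substitution into Theorem~\ref{thm:FPT-ell}. The only ingredient needed is the purely combinatorial relationship between the two parameters already noted in the discussion preceding that theorem: since $\ell = |E(G[\NT])|$ counts the edges of the subgraph induced by the $k = |\NT|$ non-terminal vertices, and a simple graph on $k$ vertices has at most $\binom{k}{2}$ edges, we have $\ell \le \binom{k}{2}$.

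Concretely, I would run the algorithm guaranteed by Theorem~\ref{thm:FPT-ell} on the given instance of {\MSTNT}. That algorithm handles an \emph{arbitrary} edge weight function, so no restriction on the weights is required for the corollary, and it runs in time $O^*(2^{\ell})$ using polynomial space. Substituting the worst-case bound $\ell \le \binom{k}{2}$ into the exponent turns the running time into $O^*(2^{\binom{k}{2}})$; because the $O^*$ notation suppresses the polynomial factors, no additional bookkeeping is needed to track lower-order terms. The polynomial space guarantee carries over verbatim from Theorem~\ref{thm:FPT-ell}, since the space usage there does not depend on the parameter.

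There is no real obstacle: the corollary is a one-line consequence whose sole purpose is to record that, in the weighted setting, the $\ell$-parameterization subsumes the $k$-parameterization (the bound $\ell \le \binom{k}{2}$ is attained exactly when $G[\NT]$ induces a clique, but tightness is irrelevant for the claimed upper bound). The only point worth stating carefully is that Theorem~\ref{thm:FPT-ell} applies to arbitrary edge weights, so the resulting $k$-parameterized algorithm inherits this generality and does not require the polynomially bounded integral weights assumed in Theorem~\ref{thm:FPT}.
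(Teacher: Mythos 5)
Your proposal is correct and matches the paper's reasoning exactly: the corollary is stated there as an immediate consequence of Theorem~\ref{thm:FPT-ell} via the bound $\ell \le \binom{k}{2}$, with no further argument needed. Your additional remarks about arbitrary edge weights and the irrelevance of tightness are accurate but not required.
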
\label{thm:FPT-k}

\subsection{Parameterization by tree-width}

Nakayama and Masuyama~\cite{Nakayama2017:outerplanar,Nakayama2019:SP} propose polynomial-time algorithms for {\STNT} on several \revised{subclasses} of bounded tree-width graphs, such as outerplanar graphs and series-parallel graphs. In this section, we show that {\STNT} can be solved in linear time on bounded tree-width graphs.

The property of being an admissible spanning tree for $\NT$ can be expressed by a formula in Monadic Second Order Logic, which will be discussed below.
By the celebrated work of Courcelle \cite{COURCELLE1990} and its optimization version \cite{ArnborgLS91:opt-Courcelle}, {\MSTNT} is fixed-parameter tractable when parameterized by tree-width.
\begin{theorem}
{\MSTNT} can be solved in linear time on bounded tree-width graphs.
\end{theorem}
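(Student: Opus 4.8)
The plan is to exhibit an MSO$_2$ formula $\varphi(F)$ with a single free edge-set variable $F$ that holds exactly when $F$ is the edge set of an admissible spanning tree for $\NT$; the statement then follows from Courcelle's theorem and its optimization extension. Throughout I would treat $\NT$ as a fixed unary vertex predicate, which is permitted in the labeled-graph formulation of Courcelle's theorem, and I would use the incidence predicate $\inc(v,e)$ available in the MSO$_2$ signature of graphs.

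First I would decompose $\varphi(F)$ into three conjuncts. The first asserts that $F$ spans and connects $G$: for every vertex set $X$ that is neither empty nor all of $V$, some edge of $F$ crosses the cut, i.e.
\[
\mathrm{conn}(F) := \forall X\bigl[(\exists u\, u \in X) \wedge (\exists y\, y \notin X) \rightarrow \exists e\, \exists a\, \exists b\, (e \in F \wedge a \in X \wedge b \notin X \wedge \inc(a,e) \wedge \inc(b,e))\bigr].
\]
Taking $X=\{v\}$ shows this already forces every vertex to be incident to $F$, so $(V,F)$ is a connected spanning subgraph. The second conjunct asserts that $F$ is acyclic: there is no nonempty $C \subseteq F$ in which every incident vertex meets exactly two edges of $C$. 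Such a $C$ is a disjoint union of cycles, hence contains a cycle, and conversely any cycle of $F$ witnesses such a $C$. The third conjunct asserts that every non-terminal has degree at least two: for each $v \in \NT$ there exist two distinct edges of $F$ incident to $v$. Both the ``exactly two'' condition inside acyclicity and the ``at least two'' condition here are expressed by existentially quantifying over a pair of distinct incident edges and, for the former, additionally forbidding a third; these are routine MSO$_2$ gadgets that I would spell out but not belabor.

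Conjoining the three conjuncts gives $\varphi(F)$, and the sentence $\exists F\, \varphi(F)$ expresses that $G$ admits an admissible spanning tree for $\NT$. For the decision problem \STNT\ this settles the claim directly by Courcelle's theorem, which evaluates a fixed MSO$_2$ sentence in time $f(\tw(G)) \cdot n$. For the weighted problem \MSTNT\ I would instead invoke the optimization version of Courcelle's theorem due to Arnborg, Lagergren, and Seese: since the objective $\sum_{e \in F} w(e)$ is a linear function of the free set variable $F$, one can compute, in linear time for bounded treewidth, a set $F$ satisfying $\varphi(F)$ of minimum total weight, which is precisely a minimum weight admissible spanning tree.

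The main obstacle is that a spanning tree cannot be pinned down by the cardinality constraint $|F| = n-1$ directly in MSO$_2$, so I would rely on the equivalence ``spanning tree $\Leftrightarrow$ connected $\wedge$ acyclic''. The acyclicity encoding via the degree-exactly-two subset $C$ is the one genuinely technical gadget to get right: I must ensure the quantifier over $C$ ranges over \emph{edge} sets (so the formula lives in MSO$_2$ rather than MSO$_1$), and that the ``no third incident edge'' clause correctly enforces degree exactly two rather than at least two. Once these encodings are verified, the remaining steps are black-box applications of the cited meta-theorems.
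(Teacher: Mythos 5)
Your proposal is correct and follows essentially the same route as the paper: express ``$F$ is an admissible spanning tree for $\NT$'' as an MSO$_2$ formula (connected, acyclic, spanning, and each vertex of $\NT$ incident to two distinct edges of $F$) and then invoke the optimization version of Courcelle's theorem of Arnborg, Lagergren, and Seese to minimize $w(F)$ over satisfying assignments. The only cosmetic difference is in the acyclicity gadget --- you forbid a nonempty $C\subseteq F$ that is everywhere of degree exactly two, while the paper asserts that every nonempty $F'\subseteq F$ has a degree-one vertex --- and both encodings are valid.
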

\begin{proof}
Let $\texttt{NTST}(F)$ be an MSO$_2$ formula that is true if and only if $F \subseteq E(G)$ forms an admissible spanning tree for $\NT$.
$\texttt{NTST}(F)$ can be expressed as follows:
\begin{align*}
        \texttt{NTST}(F):=&\ \connE(F)\land \texttt{acyclic}(F) \land 
        _{\forall v\in V, \exists e\in F} \inc(v,e) \\
        &\land _{\forall v\in \NT, \exists e_1, e_2\in F}((e_1 \neq e_2) \land \inc(v,e_1)
        \land \inc(v,e_2) )\\
        \texttt{acyclic}(F) :=&
        \ _{\forall F' \subseteq F}( F' \neq \emptyset       \implies _{\exists v \in V} \texttt{deg1}(v, F')).
\end{align*}
$\texttt{acyclic}(F)$ is an auxiliary formula that is true if and only if $F$ is acyclic in $G$.
Here, $\connE(F)$ means that the subgraph induced by an edge set $F$ is connected and $\texttt{deg1}(v, F')$ means that the degree of vertex $v$ is exactly $1$ in the subgraph induced by an edge set $F'$,  which can be formulated in MSO$_2$ \cite{CyganBook2015}.
By the optimization version of Courcelle's theorem~\cite{ArnborgLS91:opt-Courcelle}, {\MSTNT} is fixed-parameter tractable when parameterized by tree-width.
\end{proof}

\section{Hardness results}


In this section, we observe that \textsc{$s$-$t$ Hamiltonian Path} is NP-hard even on several restricted classes of graphs, which immediately implies the NP-hardness of {\STNT} as well.
The results in this section follow from the following observation.
Let $G$ be a graph and let $v$ be an arbitrary vertex in $G$.
Let $G'$ be a graph obtained from $G$ by adding a new vertex $v'$ and an edge between $v'$ and $w$ for each $w \in N(v)$.
In other words, $v$ and $v'$ are false twins in $G'$.
Then, the following proposition is straightforward.
\begin{proposition}
    Suppose that $G$ has at least three vertices.
    Then $G$ has a Hamiltonian cycle if and only if $G'$ has a Hamiltonian path between $v$ and $v'$.
\end{proposition}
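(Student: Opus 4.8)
The plan is to prove both directions by a direct combinatorial transformation that converts a Hamiltonian cycle of $G$ into a Hamiltonian path of $G'$ with endpoints $v$ and $v'$ and back again, exploiting that $v$ and $v'$ are false twins, i.e.\ $N_{G'}(v) = N_{G'}(v') = N_G(v)$ while $v$ and $v'$ are non-adjacent. For the forward direction, suppose $C$ is a Hamiltonian cycle of $G$ and write it as $v = u_0, u_1, \ldots, u_{n-1}, u_0 = v$ with $n = |V(G)|$, so that $u_1, u_{n-1} \in N_G(v)$ are the two neighbors of $v$ along $C$. I would delete the edge $\{u_{n-1}, v\}$ and instead join $u_{n-1}$ to $v'$; this is legal because $u_{n-1} \in N_G(v) = N_{G'}(v')$. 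The resulting walk $v, u_1, \ldots, u_{n-1}, v'$ uses only edges of $G'$, traverses every vertex of $G$ exactly once, and then reaches the new vertex $v'$, so it is a Hamiltonian path of $G'$ from $v$ to $v'$.

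For the converse, let $P$ be a Hamiltonian path of $G'$ from $v$ to $v'$. Since $v$ and $v'$ are its endpoints, each has degree one in $P$, and the unique $P$-neighbor of each lies in $N_G(v)$; call $a$ the neighbor of $v$ and $b$ the neighbor of $v'$. Deleting $v$ and $v'$ from $P$ leaves a path $Q$ with endpoints $a$ and $b$ that spans $V(G) \setminus \{v\}$. Because $G'$ is obtained from $G$ solely by adding $v'$ together with edges incident to $v'$, we have $G'[V(G)] = G$, and hence $Q$ is in fact a path in $G$. Adding the two edges $\{v, a\}$ and $\{v, b\}$, both present in $G$ since $a, b \in N_G(v)$, closes $Q$ into a cycle through $v$ that visits every vertex of $G$, i.e.\ a Hamiltonian cycle of $G$.

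The one point requiring care, and precisely where the hypothesis $n \ge 3$ is needed, is in guaranteeing $a \ne b$, so that $\{v, a\}$ and $\{v, b\}$ are two distinct edges and the closing step produces a genuine simple cycle rather than a degenerate double edge. Since $Q$ spans $V(G) \setminus \{v\}$, which has $n - 1 \ge 2$ vertices, its two endpoints are necessarily distinct, which resolves this case. I expect the remainder of the argument to be entirely routine, as both transformations are inverse to one another up to the choice of which incident edge of $v$ is rerouted.
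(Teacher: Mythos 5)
Your proof is correct and follows essentially the same route as the paper's: reroute one of the two cycle-edges at $v$ to $v'$ for the forward direction, and in reverse use that both path-neighbors of the endpoints lie in $N_G(v)$ and are distinct (which is exactly where $|V(G)|\ge 3$ is used) to close the cycle. The only cosmetic difference is that you delete both $v$ and $v'$ and re-attach two edges, whereas the paper swaps the single edge $\{v',w'\}$ for $\{v,w'\}$; these are the same argument.
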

\begin{proof}
 Let $C$ be a Hamiltonian cycle of $G$ and let $w$ be one of the two vertices adjacent to $v$ in $C$.
 As $w$ is adjacent to $v'$ in $G'$, $C - \{v, w\} + \{v', w\}$ is a Hamiltonian path between $v$ and $v'$ in $G'$.
 Conversely, let $P$ be a Hamiltonian path between $v$ and $v'$ in $G'$.
 Let $w$ and $w'$ be the vertices of $G'$ that are adjacent to $v$ and $v'$ in $P$, respectively.
 As $P$ is a Hamiltonian path of length at least $3$, $w$ and $w'$ must be distinct.
 Moreover, both vertices are adjacent to $v$ in $G$, implying that $P - \{v', w'\} + \{v, w'\}$ is a Hamiltonian cycle of $G$.
\end{proof}

This proposition leads to polynomial-time reductions from \textsc{Hamiltonian Cycle} to \textsc{$s$-$t$ Hamiltonian Path} on several classes of graphs.
As \textsc{Hamiltonian Cycle} is NP-hard even on strongly chordal split graphs and chordal bipartite graphs~\cite{MULLER1996}, the following corollary follows.
\begin{corollary}\label{cor:st_HP}
\textsc{$s$-$t$ Hamiltonian Path} is NP-hard even on strongly chordal split graphs and chordal bipartite graphs.
\end{corollary}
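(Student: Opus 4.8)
The plan is to combine the preceding Proposition with the fact, due to M\"uller~\cite{MULLER1996}, that \textsc{Hamiltonian Cycle} is NP-hard on strongly chordal split graphs and on chordal bipartite graphs. Given a hard instance $G$ (in one of these classes, and with at least three vertices), I would pick a vertex $v$, form $G'$ by adding a false twin $v'$ of $v$, and ask for a Hamiltonian path between $v$ and $v'$. The Proposition already guarantees that $G$ has a Hamiltonian cycle if and only if $G'$ has such a path, and the construction is clearly polynomial-time; hence the only remaining point is to check that $G'$ stays inside the respective graph class.

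The key technical step is a closure-under-false-twins lemma. First I would record the following. If $v,v'$ are non-adjacent with $N_{G'}(v)=N_{G'}(v')$, then for any induced subgraph $H=G'[S]$ containing both $v$ and $v'$, these two vertices are again non-adjacent false twins inside $H$; conversely, every induced subgraph of $G'$ that omits $v'$ is an induced subgraph of $G$, and by the symmetry $G'-v\cong G'-v'=G$ so is every induced subgraph omitting $v$. Consequently, if a hereditary class $\mathcal C$ is characterized by a family $\mathcal F$ of forbidden induced subgraphs \emph{none of which contains a pair of false twins}, then $G\in\mathcal C$ forces $G'\in\mathcal C$: a copy of some $H\in\mathcal F$ inside $G'$ would either already live in $G$, or would have to contain the false-twin pair $\{v,v'\}$, contradicting that $H$ has no false twins.

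It then remains to verify the hypotheses for each class. For chordal bipartite graphs I would take $v$ arbitrary and place $v'$ on the same side of the bipartition as $v$, which is legal because $N(v')=N(v)$ lies on the opposite side; thus $G'$ is bipartite. Its forbidden induced subgraphs are the odd cycles and the chordless even cycles of length at least $6$; in a triangle all vertices are mutually adjacent, and in any longer induced cycle all neighborhoods are distinct (two vertices sharing both cycle-neighbors would form an induced $C_4$), so none of these contains a false-twin pair. For strongly chordal split graphs I would instead choose $v$ in the independent side $I$ of a split partition $(K,I)$; since then $N(v)\subseteq K$, adding $v'$ to $I$ keeps $I$ independent and $K$ a clique, so $G'$ is split and therefore chordal and free of induced $C_4,C_5,C_{\ge 6}$. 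The only remaining obstruction to strong chordality is the induced $k$-sun with $k\ge 3$, and there the rim vertices have degree $2$ with pairwise distinct neighborhoods while the hub vertices form a clique, so a $k$-sun has no false twins either.

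Putting these together, $G'$ lies in the same class as $G$, and the Proposition yields the desired reductions. The main obstacle is not the reduction itself but the bookkeeping in the false-twin lemma, and in particular remembering that the split case forces the choice $v\in I$ — a false twin of a clique vertex would generally be adjacent to vertices of $I$ and destroy the split partition — whereas the chordal bipartite case is insensitive to the choice of $v$.
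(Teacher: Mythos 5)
Your proposal is correct and follows essentially the same route as the paper: add a false twin $v'$ of a suitably chosen vertex $v$, invoke the preceding proposition, and use M\"uller's NP-hardness of \textsc{Hamiltonian Cycle} on these two classes. The only difference is that you carefully verify class preservation (via forbidden induced subgraphs without false twins, plus the choice $v\in I$ in the split case), a step the paper leaves implicit.
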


Furthermore, since  $v$ and $v'$ are twins, the clique-width of $G'$ is the same as the clique-width of $G$.
As \textsc{Hamiltonian Cycle} is W[1]-hard when parameterized by clique-width~\cite{FominGLS10}, we obtain the following corollary.
\begin{corollary}\label{cor:st_HP_cliquewidth}
\textsc{$s$-$t$ Hamiltonian Path} is W[1]-hard when parameterized by clique-width.
\end{corollary}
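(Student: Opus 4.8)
The final statement is Corollary~\ref{cor:st_HP_cliquewidth}, which asserts that \textsc{$s$-$t$ Hamiltonian Path} is W[1]-hard when parameterized by clique-width.

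The plan is to chain together the preceding Proposition with the known clique-width-parameterized hardness of \textsc{Hamiltonian Cycle}. First I would invoke the stated result of Fomin, Golovach, Lokshtanov, and Saurabh~\cite{FominGLS10}, which establishes that \textsc{Hamiltonian Cycle} is W[1]-hard parameterized by clique-width. Given an instance $G$ of \textsc{Hamiltonian Cycle} on at least three vertices, I would fix an arbitrary vertex $v$ and form the graph $G'$ by adding a false twin $v'$ of $v$ (a new vertex adjacent to exactly $N(v)$), exactly as in the construction preceding the Proposition. The Proposition then guarantees that $G$ has a Hamiltonian cycle if and only if $G'$ has a Hamiltonian path between $v$ and $v'$, so mapping $G \mapsto (G', v, v')$ is a polynomial-time reduction from \textsc{Hamiltonian Cycle} to \textsc{$s$-$t$ Hamiltonian Path}.

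The crux of establishing W[1]-hardness for this parameterization, as opposed to merely NP-hardness as in Corollary~\ref{cor:st_HP}, is that the reduction must be a \emph{parameterized} reduction: it must not blow up the parameter by more than a computable function. Here the parameter is clique-width, so the key observation is that $\cw(G') = \cw(G)$. This holds precisely because $v$ and $v'$ are twins in $G'$: given any clique-width expression for $G$ using a certain number of labels, one can produce a clique-width expression for $G'$ with the same number of labels by introducing $v'$ with the same label as $v$ at the moment $v$ is introduced, since twins can be added without any extra labels. Thus the parameter is preserved exactly, and the reduction is a valid parameterized reduction.

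I expect the main (and only) obstacle to be the verification that adding a twin does not increase clique-width; everything else follows immediately from the cited result and the already-proven Proposition. This twin-preservation fact is standard for clique-width, so in the write-up I would simply state it as the reason $\cw(G')=\cw(G)$ and conclude that W[1]-hardness transfers. Concretely, the argument is just: by the Proposition the map is a correct many-one reduction, it runs in polynomial time, and since $\cw(G')=\cw(G)$ the parameter is preserved, so W[1]-hardness of \textsc{Hamiltonian Cycle} parameterized by clique-width~\cite{FominGLS10} carries over to \textsc{$s$-$t$ Hamiltonian Path}, yielding Corollary~\ref{cor:st_HP_cliquewidth}.
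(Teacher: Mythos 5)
Your proposal is correct and follows essentially the same route as the paper: the twin construction from the preceding Proposition, the observation that adding a false twin preserves clique-width (so the parameter is unchanged), and the W[1]-hardness of \textsc{Hamiltonian Cycle} parameterized by clique-width from~\cite{FominGLS10}. The only difference is that you spell out the label-reuse argument for why twins do not increase clique-width, which the paper states without justification.
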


In \cite{Melo_TCP_2021}, de Melo, de Figueiredo, and Souza show that \textsc{$s$-$t$ Hamiltonian Path} is NP-hard even on planar graphs of maximum degree 3.
The following theorem summarizes the above facts.
\begin{theorem}\label{thm:STNT:NP-hard}
{\STNT} is NP-hard even on planar bipartite  graphs of maximum degree 3, strongly chordal split graphs, and chordal bipartite graphs. Furthermore, it is W[1]-hard when parameterized by clique-width.
\end{theorem}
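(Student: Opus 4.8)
The plan is to exhibit \textsc{$s$-$t$ Hamiltonian Path} as a special case of {\STNT} via a reduction that leaves the underlying graph untouched, so that every known hardness result for the former transfers verbatim, carrying along the graph class and every structural parameter of the instance. Concretely, given an instance $(G, s, t)$ of \textsc{$s$-$t$ Hamiltonian Path} on $n \ge 3$ vertices, I would output the instance $(G, \NT)$ of {\STNT} with $\NT = V(G) \setminus \{s, t\}$, and claim that $G$ has an $s$-$t$ Hamiltonian path if and only if $(G, \NT)$ is a yes-instance. This is exactly the observation of Nakayama and Masuyama recalled in the introduction.

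The correctness of this reduction is the one point that needs a (short) argument, and I would dispatch it by a degree count. In the forward direction, an $s$-$t$ Hamiltonian path $P$ is a spanning tree whose only degree-$1$ vertices are $s$ and $t$, so every vertex of $\NT$ is internal and $P$ is admissible for $\NT$. For the converse, let $T$ be an admissible spanning tree for $\NT$; then each vertex of $\NT$ has degree at least $2$ and $s,t$ have degree at least $1$, whence
\begin{align*}
    2(n-1) = \sum_{v \in V(G)} \deg_T(v) \ge 2(n-2) + 1 + 1 = 2(n-1).
\end{align*}
Equality must hold throughout, forcing $\deg_T(s) = \deg_T(t) = 1$ and $\deg_T(v) = 2$ for all $v \in \NT$; since a tree with exactly two leaves is a path, $T$ is an $s$-$t$ Hamiltonian path.

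Because the graph $G$ is literally identical in the two instances, the reduction preserves membership in any graph class and preserves every parameter of $G$, in particular its clique-width. I would then read each assertion of the theorem off the corresponding fact about \textsc{$s$-$t$ Hamiltonian Path}: the NP-hardness on strongly chordal split graphs and on chordal bipartite graphs from \Cref{cor:st_HP}, the W[1]-hardness parameterized by clique-width from \Cref{cor:st_HP_cliquewidth}, and the NP-hardness on planar bipartite graphs of maximum degree $3$ from \cite{Melo_TCP_2021}.

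The main obstacle I anticipate is the planar-bipartite-degree-$3$ case. The transfer step itself is immediate, but it requires the source hardness for \textsc{$s$-$t$ Hamiltonian Path} to hold simultaneously under planarity, bipartiteness, and the maximum-degree-$3$ bound. In particular, the false-twin construction used to obtain the chordal and clique-width corollaries is unavailable here, since duplicating a vertex $v$ raises the degree of each neighbor of $v$ by one and can break the degree-$3$ constraint; consequently this last case must rest directly on an \textsc{$s$-$t$ Hamiltonian Path} hardness result tailored to planar bipartite subcubic graphs rather than on the generic twin gadget. I would therefore take care to cite a source that establishes all three restrictions at once, and flag that this is the only step where the identity-on-$G$ transfer does not do all the work by itself.
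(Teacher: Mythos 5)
Your proof is correct and follows essentially the same route as the paper: the theorem is obtained by viewing \textsc{$s$-$t$ Hamiltonian Path} with $\NT = V \setminus \{s,t\}$ as a special case of {\STNT} and transferring the hardness results of \Cref{cor:st_HP}, \Cref{cor:st_HP_cliquewidth}, and the planar subcubic result of \cite{Melo_TCP_2021}. Your explicit degree-counting argument for the correctness of the trivial reduction, and your remark that the planar bipartite degree-$3$ case must rest on a source result establishing all three restrictions at once (since the false-twin gadget breaks the degree bound), are both accurate and merely make explicit what the paper leaves implicit.
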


Furthermore, it is known that \textsc{Hamiltonian Cycle} cannot be solved in time $O^*(2^{o(n)})$ unless Exponential Time Hypothesis (ETH) fails \cite{CyganBook2015}. Thus, we immediately obtain the following theorem.
\begin{theorem}\label{thm:STNT:ETH}
{\STNT} cannot be solved in time $O^*(2^{o(n)})$ unless ETH fails.
\end{theorem}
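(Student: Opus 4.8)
The plan is to chain together the two vertex-linear reductions that have already been established, starting from \textsc{Hamiltonian Cycle}, whose $O^*(2^{o(n)})$ lower bound under ETH is quoted immediately above the statement. First I would take an instance $G$ of \textsc{Hamiltonian Cycle} on $n \ge 3$ vertices and apply the false-twin construction from the preceding proposition to obtain a graph $G'$ together with the twin pair $v, v'$; this graph has exactly $n + 1$ vertices, and $G$ has a Hamiltonian cycle if and only if $G'$ has a Hamiltonian path between $v$ and $v'$.

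Next I would translate the resulting \textsc{$s$-$t$ Hamiltonian Path} instance into an instance of {\STNT} using the special case recorded in the introduction: setting $s = v$, $t = v'$, and $\NT = V(G') \setminus \{v, v'\}$, a spanning tree of $G'$ is admissible for $\NT$ precisely when it is a Hamiltonian path between $v$ and $v'$. This second step does not alter the underlying graph and hence keeps the vertex count at $n + 1$.

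Finally I would argue by contraposition. If {\STNT} admitted an algorithm running in time $O^*(2^{o(N)})$ on $N$-vertex graphs, then composing it with the above reductions would decide \textsc{Hamiltonian Cycle} on $G$ in time $O^*(2^{o(n+1)}) = O^*(2^{o(n)})$, contradicting the assumed ETH lower bound. The only thing that needs checking is that the total blow-up in the number of vertices is linear; here it is merely additive by one, so no genuine obstacle arises beyond carefully tracking the vertex count through the two reductions. In other words, the substance of the proof is entirely the linearity of the reduction, and the ETH conclusion then transfers verbatim.
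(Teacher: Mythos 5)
Your proposal is correct and follows the same route the paper takes: the paper's one-line proof implicitly relies on exactly the chain you spell out (the false-twin reduction from \textsc{Hamiltonian Cycle} to \textsc{$s$-$t$ Hamiltonian Path}, followed by the observation that the latter is the special case of {\STNT} with $\NT = V \setminus \{s,t\}$), and your explicit tracking of the additive-one vertex blow-up is precisely what makes the ETH lower bound transfer.
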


\section{Conclusion}\label{sec:conclusion}
In this paper, we studied {\STNT} and {\MSTNT} from the viewpoint of parameterized complexity.
We showed that {\STNT} admits a linear vertex kernel with respect to the number of non-terminal vertices $k$, as well as polynomial kernels with respect to vertex cover number and max leaf number.
For the weighted counterpart, namely {\MSTNT}, we give an $O^*(2^k)$-time algorithm for graphs with polynomially-bounded integral edge weight and $O^*(2^{\ell})$-time algorithm for graphs with arbitrary edge weight, where $\ell$ is the number of edges in the subgraph induced by non-terminals.
We proved that {\MSTNT} is fixed-parameter tractable when parameterized by tree-width whereas it is W[1]-hard when parameterized by clique-width.

As future work, we are interested in whether {\STNT} can be solved in time  $O^*((2-\epsilon)^k)$ or $O^*((2-\epsilon)^{\ell})$ for some $\epsilon>0$. Also, it would be worth considering other structural parameterizations, such as cluster deletion number or modular-width.
\revised{Another possible direction is to consider the problem of finding a spanning tree that maximizes the number of internal vertices in $\NT$. This problem simultaneously generalizes our problem and \textsc{Max Internal Spanning Tree} by setting $V=\NT$.
It would be interesting to explore the (parameterized) approximability of this problem.
}



\bibliography{ref}

\end{document}